\newif\ifAnon\Anontrue
\newif\ifNotes\Notestrue
\newif\ifCCS\CCStrue
\newif\ifRevision\Revisiontrue %
\newif\ifRevision\Revisionfalse
\newif\ifCamera\Cameratrue
\newcolumntype{R}[1]{>{\RaggedLeft\arraybackslash}p{#1}}
\renewcommand\theHALG@line{\thealgorithm.\arabic{ALG@line}}
\definecolor{citecolor}{rgb}{0.65,0,0}
\definecolor{linkcolor}{rgb}{0.,0.,0.}
\definecolor{urlcolor}{rgb}{0,0,0.65}
\newcommand{\Setup}{\mathsf{Setup}}
\newcommand{\Init}{\mathsf{Init}}
\newcommand{\Load}{\mathsf{Load}}
\newcommand{\Store}{\mathsf{Store}}
\newcommand{\scm}{\mathsf{scm}}
\newcommand{\nonce}{\mathsf{nonce}}
\newcommand{\UpdateState}{\mathsf{UpdateState}}
\newcommand{\InitState}{\mathsf{InitState}}
\newcommand{\ReadState}{\mathsf{GetState}}
\newcommand{\genSymKey}{\mathsf{genEncKey}}
\newcommand{\genAsymKey}{\mathsf{genSignKey}}
\newcommand{\mech}{\mathcal{M}_\query}
\newcommand{\cmark}{\ding{51}}%
\newcommand{\msg}{\mathsf{msg}}
\newcommand{\cd}{D}
\newcommand{\enccd}{\overline{\cd}}
\newcommand{\budget}{B}
\newcommand{\ReplyQueries}{\mathsf{ReplyQuery}}
\newcommand{\query}{q}
\newcommand{\answer}{a}
\newcommand{\id}{id}
\newcommand{\datakey}{k_\cd}
\newcommand{\statekey}{k_s}
\newcommand{\vk}{\mathsf{vk}}
\newcommand{\sk}{\mathsf{sk}}
\newcommand{\SCM}{\mathsf{SCM}}
\newcommand{\ST}{\mathsf{ST}}
\newcommand{\spec}{\mathsf{DPSpec}}
\newcommand{\dpsystem}{\textsc{ElephantDP}\xspace}
\newcommand{\naivedp}{\textsc{NaiveDP}\xspace}
\newcommand{\curstate}{\mathsf{curstate}}
\newcommand{\newstate}{\mathsf{newstate}}
\newcommand{\Encrypt}{\mathsf{Encrypt}}
\newcommand{\Decrypt}{\mathsf{Decrypt}}
\newcommand{\hash}{\mathsf{Hash}}
\newcommand{\Sign}{\mathsf{Sign}}
\newcommand{\Verify}{\mathsf{Verify}}
\newcommand{\statetext}{\text{`state'}\xspace}
\newcommand{\datatext}{\text{`data'}\xspace}
\newcommand{\encs}{\overline{s}}
\newcommand{\eg}{{e.g.,}\xspace}
\newcommand{\ie}{{i.e.,}\xspace}
\newcommand{\RMSE}{\ensuremath\mathsf{RMSE}\xspace}
\definecolor{fuchsia}{HTML}{FF00FF}
  \newcommand{\colorednote}[2]{\leavevmode\unskip\space\textcolor{#1}{#2}\xspace}
  \newcommand{\colorednote}[2]{\leavevmode\unskip\relax}
\definecolor{ForestGreen}{HTML}{228B22}
\newcommand\remove[1]{{\textcolor{ForestGreen}{\sout{#1}}}}
\newcommand\remove[1]{}
\newcommand{\parhead}[1]{\vspace{3pt plus 1pt minus 1pt}\par\noindent\textbf{#1}\hspace{.4em plus .2em minus .2em}}
\renewcommand{\paragraph}[1]{\parhead{#1}}
\gdef\@copyrightpermission{
 \begin{minipage}{0.3\columnwidth}
   \href{https://creativecommons.org/licenses/by-nd/4.0/}{\includegraphics[width=0.90\textwidth]{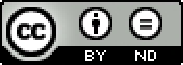}}
 \end{minipage}\hfill
 \begin{minipage}{0.7\columnwidth}
   \href{https://creativecommons.org/licenses/by-nd/4.0/}{This work is licensed under a Creative Commons Attribution-NoDerivs International 4.0 License.}
 \end{minipage}
 \vspace{5pt}
}
\begin{document}

\title{Elephants Do Not Forget: \\ Differential Privacy with State Continuity for Privacy Budget}

\author{Jiankai Jin}
\orcid{0009-0009-1008-482X}
\affiliation{%
  \institution{The University of Melbourne}
  \city{Melbourne}
  \country{Australia}}
\email{jiankaij@student.unimelb.edu.au}

\author{Chitchanok Chuengsatiansup}
\orcid{0000-0002-0329-2681}
\affiliation{%
  \institution{The University of Melbourne}
  \city{Melbourne}
  \country{Australia}}
\email{c.chuengsatiansup@unimelb.edu.au}

\author{Toby Murray}
\orcid{0000-0002-8271-0289}
\affiliation{%
  \institution{The University of Melbourne}
  \city{Melbourne}
  \country{Australia}}
\email{toby.murray@unimelb.edu.au}

\author{Benjamin~I.~P. Rubinstein}
\orcid{0000-0002-2947-6980}
\affiliation{%
  \institution{The University of Melbourne}
  \city{Melbourne}
  \country{Australia}}
\email{benjamin.rubinstein@unimelb.edu.au}

\author{Yuval Yarom}
\orcid{0000-0003-0401-4197}
\affiliation{%
  \institution{Ruhr University Bochum}
  \city{Bochum}
  \country{Germany}}
\email{Yuval.Yarom@rub.de}

\author{Olga Ohrimenko}
\orcid{0000-0002-9735-0538}
\affiliation{%
  \institution{The University of Melbourne}
  \city{Melbourne}
  \country{Australia}}
\email{oohrimenko@unimelb.edu.au}

\renewcommand{\shortauthors}{Jiankai Jin et al.}

\begin{abstract}
Current implementations of differentially-private (DP) systems 
either lack support to track the global privacy budget consumed on a dataset, 
or fail to faithfully maintain the state continuity of this budget.
We show that failure to maintain a privacy budget enables an adversary to mount replay, 
rollback and fork attacks --- obtaining answers to many more queries than what a secure system would allow.
As a result the attacker can reconstruct secret data that DP aims to protect ---
even if DP code runs in a Trusted Execution Environment (TEE).
We propose \dpsystem, a system that aims to provide the same guarantees 
as a trusted curator in the global DP model would, albeit set in an untrusted environment.
Our system relies on a state continuity module to provide protection for the privacy budget and
a TEE to faithfully execute DP code and update the budget.
To provide security, our protocol makes several design choices including the content of the persistent state
and the order between budget updates and query answers.
We prove that \dpsystem provides \emph{liveness} (\ie the protocol can restart from a correct state 
and respond to queries as long as the budget is not exceeded) and \emph{DP confidentiality} 
(\ie an attacker learns about a dataset as much as it would from interacting with a trusted curator). 
Our implementation and evaluation of the protocol use Intel SGX as a TEE to run the DP code 
and a network of TEEs to maintain state continuity.
Compared to an insecure baseline,
we observe 1.1--3.2$\times$ overheads and lower relative overheads 
for complex DP queries.
\end{abstract}

\begin{CCSXML}
<ccs2012>
<concept>
<concept_id>10002978.10003029.10011150</concept_id>
<concept_desc>Security and privacy~Privacy protections</concept_desc>
<concept_significance>500</concept_significance>
</concept>
<concept>
<concept_id>10002978.10003001</concept_id>
<concept_desc>Security and privacy~Security in hardware</concept_desc>
<concept_significance>500</concept_significance>
</concept>
</ccs2012>
\end{CCSXML}

\ccsdesc[500]{Security and privacy~Privacy protections}
\ccsdesc[500]{Security and privacy~Security in hardware}

\keywords{Differential Privacy, Trusted Execution Environment, State Continuity}

\maketitle

\section{Introduction}
Differential privacy (DP) is becoming the de facto framework for maintaining
data privacy when releasing statistics or machine learning models.
Its minimal assumptions on the adversary's background knowledge
and quantification of accumulated privacy loss across multiple data releases 
have contributed to several practical deployments (\eg by Apple, Google, Meta, Microsoft, 
and the U.S.~Census Bureau~\cite{tang2017privacy,bittau2017prochlo,allen2019algorithmic,censusbureau}).

\begin{table}[t]
\ifCCS
\vspace{5pt}
\fi
\centering
\caption{
Summary of available implementations of differentially private (DP) systems.
We identify systems that run DP code in a Trusted Execution Environment (TEE),
answer single queries only, \ie without global privacy budget (None),
maintain the privacy budget stored in runtime memory only (Runtime) 
and store the budget in a file in persistent storage (Persistent).
\dpsystem is the only system that uses TEEs (Intel SGX) and guarantees state continuity for the privacy budget
stored in persistent storage.
}

\setlength\tabcolsep{3.7pt}
\begin{tabular}{@{}l|p{1.3cm}|c||p{1.15cm}|p{1.9cm}}
\hline
\multirow{3}{*}{DP Systems}&\multicolumn{2}{c||}{\textbf{Components}}&\multicolumn{2}{c}{\textbf{Properties}}\\
\cline{2-5}
&\centering{Budget Storage} 
&{\rotatebox[origin=c]{90}{\hspace{-6mm}TEE}}
& \multirow{2}{*}{Liveness} & \multirow{2}{*}{Confidentiality} \\
\hline
Uber SQL~\cite{ubersql}		&		 \multirow{2}{*}{\rotatebox[origin=c]{0}{~~~None~~}}	&&		&		\\
PROCHLO~\cite{bittau2017prochlo}	& 	&	\cmark	&		&	 \multicolumn{1}{c}{\cmark}	\\
\hline
  OpenDP\footnote~\cite{opendp-lib,opendp}	&	\multirow{6}{*}{\rotatebox[origin=c]{0}{Runtime}}	&		&		&		\\
Tumult Anal.\footnote[1]~\cite{tumult}		&	 	&	&		&		\\
PINQ\footnote[1]~\cite{pinq}			&		& 	&		&		\\
GoogleDP~\cite{googledp}		&		& 	&		&		\\
Allen~\textit{et al.}~\cite{allen2019algorithmic} &  & \cmark& & \multicolumn{1}{c}{\cmark}	  \\
DuetSGX~\cite{duetsgx}		& 	& \cmark	&		&	 \multicolumn{1}{c}{\cmark}	\\
\hline
PySyft~\cite{Ziller2021}		&		 \multirow{2}{*}{\rotatebox[origin=c]{0}{Persistent}} 	&&  \multicolumn{1}{c|}{\cmark}	& 		\\
\textbf{\dpsystem}			&  &  \cmark	&  \multicolumn{1}{c|}{\cmark}	& \multicolumn{1}{c}{\cmark}	\\
\hline
\end{tabular}

\label{tab:budget-impl}
\end{table}

Many of these deployments use the \emph{Local DP} model where the organization collecting the data is not trusted and each user
applies DP algorithms locally to their own data before sending it 
to the organization (\eg Apple, Microsoft and Google use local DP for collecting telemetry results).
On the other hand, \emph{Global DP}
assumes that the \emph{trusted curator} collects raw data and answers queries of an untrusted analyst by running DP algorithms.
Though global DP provides much better utility than the local model~\cite{bittau2017prochlo}, 
deploying it in practice is difficult since the organization 
who is collecting the data is often also the one analyzing it (\eg tech companies build services from the insights they draw from their users' data).
One could envision creating a server that hosts the data and provides a DP query interface to it.
However, where would this server reside? The problem of trust persists if it is hosted on the premises of the organization.
 If, instead, a cloud service was used, the cloud provider would need to be trusted with the raw data --- likely not acceptable
for the data that required DP protection in the first place.

Running DP code on encrypted data in a Trusted Execution Environment (TEE)~\cite{SGX} can
prevent the organization or cloud provider from seeing the data in the clear. However,
we show that running existing DP systems in TEEs is not sufficient to provide DP guarantees and
protect data against reconstruction attacks.
These attacks stem from an essential, but often overlooked, part of a DP system: the privacy budget.
Since the privacy budget is used to constrain the number of queries the system can answer while providing DP guarantees, 
any tampering with the budget can lead to information leakage. 
For example, by rolling back the budget or answering multiple queries from the same budget,
the attacker gets access to many more outputs than what is permitted by the budget.
Even though each query output will have DP guarantees, 
the results can be averaged to cancel out the noise.
Indeed it is known that even if  noise is applied to the queries' output, 
the more queries are answered the more information is released about a dataset~\cite{dinur2003revealing}.

\footnotetext{Description of these systems suggests that the budget is maintained only during a DP session  and does not explicitly mention persisting the budget.}

In this paper we address the problem of designing a secure global DP system in an untrusted setting.
We first identify that a secure system
needs to provide two key properties:
\emph{liveness}, requiring that after a crash it can recover with the latest privacy budget; and
\emph{DP confidentiality} (or confidentiality for short), meaning that it leaks no more
information than one in which the curator is trustworthy---even in the
face of collusion between the cloud provider hosting the DP system, the
network, and the data analyst who submits queries to the curator.
Since DP mechanisms are randomized, we
cannot insist on the DP system producing the same answers as
those produced by a trusted curator. To this end,
we propose a notion of equivalent transcripts over the distributions of answers.
We then design a system, \dpsystem, that guarantees these properties even when an attacker
controls the persistent storage.

\parhead{Existing DP Systems.}
\cref{tab:budget-impl} shows that existing DP systems either do not provide liveness (\ie cannot recover privacy budget after a crash)
or cannot guarantee DP confidentiality in an untrusted system.
For example, DP systems that maintain the budget only in runtime are prone to losing it if a system crashes (even if the crash is benign).
Indeed, only one system stores the budget in persistent memory.
However, none of the systems protect against replay, forking or rollback attacks against the budget if the system were restarted. 
We use two (insecure) instantiations below to highlight the difficulties in securely extending
previous DP systems:

\emph{(Insecure) Instantiation 1:}
Consider a global DP system that runs in a TEE~\cite{allen2019algorithmic,duetsgx}.
In this system to protect data from a cloud service provider, 
the data owner submits raw data, encrypted, which is then decrypted 
in the TEE. To protect against the analyst, data is
processed with a DP algorithm by the TEE, ensuring that only DP-protected
output ever leaves the TEE in the clear. 
To accommodate multiple queries, budget is uploaded in the TEE's runtime memory. 
Since the budget is stored and updated within the TEE, the budget integrity is protected. Once the budget is exceeded the TEE shuts down.
However, if the TEE crashes, one will not know how many queries were
answered and if budget was exceeded. Hence, such a system cannot provide DP confidentiality after a crash.
This instantiation is inspired by systems that support runtime budget in \cref{tab:budget-impl} (\eg GoogleDP~\cite{googledp}).

\emph{(Insecure) Instantiation 2:}
To be able to restart and provide DP guarantees, the privacy budget should be stored in persistent storage.
However, such storage is provided by an untrusted cloud provider and can be easily tampered with
(validated empirically in \cref{sec:sgx-attack}).
Signatures on the stored budget can protect from an attacker setting their own budget value; however, they do not protect against attackers supplying
previously stored budget states to
a restarted TEE
(\ie rollback attacks) or creating multiple TEEs with the same budget (\ie fork attacks).
Such budget attacks invalidate DP guarantees on the queries answered after a restart.
(Instantiation~2 is inspired by PySyft~\cite{Ziller2021} which is the only system that persists its budget in a database; albeit without protection against
a malicious cloud provider or budget attacks.)

\parhead{\dpsystem.}
Since none of the existing systems can protect against budget attacks in case of a system crash,
we propose \dpsystem 
that provides both liveness and confidentiality.
Our system relies on a carefully designed protocol around a state continuity module (SCM), 
which tracks budget updates, and a TEE, which runs the DP code.
Though there is a significant amount of existing work on each of these components, integrating them with each other
is not trivial and can lead to an insecure implementation.
First,
an incorrect order in which the budget is updated and the answer to a query is returned
can reveal answers to more queries than what the budget should allow.
Second, DP algorithms are randomized; if an answer to a previously answered query is recomputed, it is very likely to return a different result --- facilitating a differencing attack to recover the sensitive output.
Finally, to guarantee liveness, the (untrusted) environment should be able to (re)start TEEs.
The latter can lead to multiple
TEEs answering queries in parallel --- enabling a fork attack as a~result.

To address the above challenges, we make several design choices that contribute to the security of the overall system, 
including what needs to go into the persisted state, and the order in which
the budget is updated and the query is replied.
For example, \dpsystem records the answer {to the last query} 
and the updated budget in untrusted storage and SCM, before the answer is sent to the analyst.
We prove that \dpsystem guarantees liveness and DP confidentiality, while relying on TEE security and
the existence of minimal state continuity functionality (\eg TEE-based distributed systems for state maintenance 
such as Nimble~\cite{angel2023nimble} support this).
Our system provides confidentiality guarantees even if the data analyst colludes with the storage provider, 
intercepts the network between SCM and TEE,
and could crash and restart the~TEEs.

In summary, our contributions are:

\begin{itemize}[nosep,leftmargin=*]
\item \emph{Problem formulation.} We highlight that security of a DP system crucially relies on faithfully maintaining the privacy budget. We show that existing approaches do not protect the budget and, hence, are vulnerable to attacks that invalidate their DP guarantees (even if the code runs in a~TEE).
\item \emph{DP security requirements.} We formulate the threat model and properties of a secure instantiation of global DP model when executed by an untrusted system provider.
\item \emph{Protocol.} We propose \dpsystem, first DP system that explicitly considers protection of budget's continuity between queries, thus 
  providing DP guarantees akin to those of~a~trusted~curator.
\item \emph{Implementation.} We implement \dpsystem using Intel SGX enclaves to run DP code 
and a state continuity module (provided by a distributed network of TEEs~\cite{niu2022narrator}).
Compared to an insecure implementation, our system provides 1.1--3.2$\times$ query overhead.
We observe that the {relative} overhead of \dpsystem is lower when the DP system processes complex queries.
\end{itemize}

\section{Background}
\label{sec:background}

\paragraph{Global Differential Privacy.} 
Consider a space of possible datasets $\mathcal{D}$
with a binary \textit{neighboring} relation
--- two datasets~$\cd, \cd' \in \mathcal{D}$ are neighboring if they differ in a single record 
--- and consider an arbitrary space of queries $\mathcal{Q}$ against datasets, 
and possible query responses as outputs in $\mathcal{R}$. 
We abstract the particulars of differentially-private mechanisms 
and consider general randomized mechanisms $\mech: \mathcal{D}\times\mathcal{Q} \rightarrow \mathcal{R}$, 
that release some random answer $\answer$ when running the mechanism $\mech(\cd)$ on a dataset and query~$\query$. 
We consider $\mech$ to be $(\epsilon,\delta)$-\textit{differentially private}~\cite{dwork2006calibrating} 
for $\epsilon>0$ and $\delta\in[0,1)$ if given any two neighboring $\cd, \cd' \in \mathcal{D}$ 
and any subset of outputs $R \subseteq \mathcal{R}$
it holds that $\Pr[\mech(\cd) \in R] \le \exp(\epsilon) \Pr[\mech(\cd') \in R] + \delta$.
This means that except for unlikely events --- those with negligible probabilities --- the probability of observing outputs from $\mech(\cd)$ or $\mech(\cd')$ 
are always close and bounded by $\exp(\epsilon)$. (Note that for small $\epsilon$ budget, $\exp(\epsilon)\approx 1+\epsilon$.) 
For $\epsilon$-DP, one needs to ensure that $\Pr[\mech(\cd) \in R] \le \exp(\epsilon) \Pr[\mech(\cd') \in R]$ holds.
Differential privacy therefore guarantees that an attacker has limited ability 
to determine whether $\cd$ or $\cd'$ is used as input to a mechanism, based on its outputs.

Suppose that $\epsilon_i$ is the privacy loss for answering query~$i$. 
Due to the (basic) composition property of DP, if responses to~$k$ 
such queries are released from the same dataset, the total privacy loss is at most $\sum_{i=1}^{k} \epsilon_i$ for that dataset.
A data owner may assign a dataset with a total privacy budget~$\budget$. 
Then maintaining the budget means disallowing any queries to this dataset that would exceed
the budget, that is not answering a $k$th query if $\sum_{i=1}^{k} \epsilon_i > \budget$.
To support this behavior, we define $\spec(\query,\budget)$ that returns a tuple $(\epsilon,\delta)$ that determines $\query$'s
expended privacy loss and a
DP mechanism $\mech$ for answering query $\query$ 
(\eg depending on the query, it may return a Laplace or a Gaussian mechanism) given 
the current budget (\eg if the budget is low, the specification may suggest using a smaller $\epsilon$ 
to expand the budget at the expense of less accurate answers, see Section~\ref{sec:discussion} for budget management strategies). 
Then, given dataset~$\cd$, $\mech$ computes the differentially-private answer~$\answer$.
While $\answer$ is computed using the dataset~$\cd$ and secret random coins, privacy parameters $\epsilon, 
\delta$ are computed deterministically and independently of~$\cd$. 
From now on, we omit $\delta$ and consider a system that supports $\epsilon$-DP where $\spec$ returns $\mech$ and $\epsilon$ only.
For $(\epsilon,\delta)$-DP systems the budget would be set for both~$\epsilon$ and~$\delta$ and both will be updated after each query.
 
In this study, we show that budget $\budget$ can be attacked (\eg rolled back or reused) 
and propose a mitigation to faithfully maintain the state continuity of the budget for a dataset in a DP system. 
While past work on composition considers the accumulation 
of privacy loss~\cite{dwork2010boosting,pmlr-v37-kairouz15,dwork2014algorithmic}, 
the focus of such work is distinct to our own: 
there the goal is to more accurately analyze theoretical privacy loss across multiple queries 
beyond the (worst-case) bound of basic composition. 
Existing DP systems implement these composition properties and conveniently track privacy loss, 
however they offer no state continuity guarantees.
We seek state continuity of this accumulated privacy loss in system implementations 
even in the face of attackers attempting to tamper with the budget.

\paragraph{Trusted Execution Environment.}
Trusted Execution Environments (TEEs) allow the sensitive part of an application 
to run in an isolated and memory protected area (\eg an Intel SGX enclave~\cite{SGX}).
Hence, one can deploy applications with TEEs in a potentially untrusted cloud environment.
TEEs provide several properties: 
(1) The confidentiality and integrity of an application.
When an application is running, 
its memory is held in an isolated and protected area that no privileged processes/users can read nor tamper with.
When an application transitions to a state of rest, it may choose to seal
its state (encrypted with authenticated encryption) and recover it later.
(2) The remote attestation feature guarantees that one can confirm that
the code running in a TEE is the expected code and running on the legitimate hardware~\cite{Costan2016IntelSE}.
Though our experiments rely on Intel SGX, our protocol can be instantiated with any other
trusted module that provides the above guarantees.

\paragraph{State Continuity.}
\label{sec:scm}
TEEs, such as those provided by Intel SGX, do not provide persistent state. If a
TEE crashes or is maliciously restarted, its runtime state is
lost. Hence, applications running in TEEs must explicitly address this.
Unfortunately sealing described above can only ensure that the state content is protected
and a TEE can detect if it was tampered with. However, it cannot detect if the state it is unsealing
is the most recent or whether there is another TEE that is already running from the same state.
These attacks are possible in the case of a compromised storage service or operating system.
We observe that for a TEE running DP code this means that one can either reset the budget that has been stored in a sealed state --- mounting a rollback attack, 
or start more than one enclave from the same state --- mounting a fork attack. To this end, 
one needs to rely on external methods for maintaining state continuity~\cite{jangid2021towards}.

\begin{algorithm}[t]
\caption{Functionality provided by $\SCM$.
All functions are executed atomically.}
\begin{algorithmic}
\State $(\sk_\scm, \vk_\scm)$ \Comment{SCM's signing and verification key}
\State $\id_\scm$ \Comment{State id stored at SCM}

\State State $\curstate$ \Comment{Current state stored at SCM}

\\

\Function{$\InitState$}{$\nonce, \newstate$} \Comment{Set initial state}
   \State $\curstate \gets \newstate$
   \State $\id_\scm \gets 0$
    \State \Return $\Sign(\sk_\scm, \id_\scm, \curstate, \nonce)$ 
\EndFunction

\\

\Function{$\ReadState$}{$\nonce$} 
   \State $\sigma_\scm \gets \Sign(\sk_\scm, \id_\scm, \curstate, \nonce)$
   \State \Return $\sigma_\scm, \id_\scm, \curstate$ 
\EndFunction

\\

\Function{$\UpdateState$}{$\nonce, \id, \newstate$}
\If{$\id = \id_\scm + 1$}
   \State $\curstate \gets \newstate$
   \State $\id_\scm \gets \id$
   \State   $r \gets \mathsf{Ack}$
\Else
   \State $r \gets \mathsf{Error}$
\EndIf
 \State \Return  $\Sign(\sk_\scm, r, \nonce)$, $r$
\EndFunction

\end{algorithmic}
\label{alg:scm}
\end{algorithm}    

State continuity can be obtained using hardware primitives such as monotonic counters,
NVRAM or TPM~\cite{strackx2016ariadne,parno2011memoir, brandenburger2017rollback}.
However, such solutions present one point of failure.
An alternative is to store a state replicated across multiple nodes (e.g., via a distributed ledger).
To guarantee fault-tolerance in a malicious setting
they assume that a majority of the nodes is alive and honest (e.g., implemented using TEEs as 
in~ROTE~\cite{matetic2017rote}, Narrator\cite{niu2022narrator}, Nimble~\cite{angel2023nimble} and Confidential Consortium Framework~\cite{russinovich2019ccf}).

Our protocol relies on an SCM abstraction and is independent of the exact implementation.
We abstract the required functionality of state continuity module~(SCM) in~\cref{alg:scm}.
It closely resembles the one presented in Nimble~\cite{angel2023nimble} while explicitly stating the signing and state update operations of the service.
The SCM maintains $\curstate$ and the counter of the number of times it was updated,~$\id_\scm$.
The size of the state supported by SCM depends on its implementation. 
For example, if it is implemented with a TPM, the state is small (e.g., a hash of the application state) to fit in the TPM's available non-volatile memory.
For our protocol, we require the SCM to sign its replies to ensure that
states sent to TEEs are from a valid SCM.
Queries sent to the SCM are augmented with $\nonce$, requiring SCM's signature to include it.
The nonce ensures that the reply from the SCM is current and prevents replay attacks by a malicious network.
SCM updates the state only if it was requested to update it with a state associated with the next counter value (\ie $\id_\scm +1$).

Our protocol will leverage SCM's capabilities to build a secure DP system. 
In~\cref{sec:overview}, we discuss several naive (but insecure) designs of integrating SCM and DP.
To this end, the proposed system, \dpsystem, will need to carefully decide on what DP components to persist in a state
and saved at the SCM, in which order to update the state locally, at the SCM, and when to send the
reply to the analyst. In~\cref{sec:analysis}, we prove these design choices lead to a secure system.

\section{Setting and Threat Model}

In this section we outline the threat model when instantiating a trusted DP curator in an adversarial environment.
We then show that existing implementations set in this threat model are vulnerable to budget attacks
and show how one can reconstruct the statistics that DP aims to protect.

\subsection{Problem Setting}
We consider a setting where 
a \emph{data owner} has a static dataset~$\cd$.
She wishes to setup a cloud service 
that would run on her behalf and answer \emph{data analyst's} queries on this dataset.
The data owner wants analyst's queries to be answered with differential privacy guarantees 
as long as they do not exceed a certain budget $\budget$ (set by the owner). 
Moreover, the budget should be reduced by~$\epsilon$ associated with each answered query.

In addition to the data owner and an analyst, our setting consists of a \emph{server}.
The data owner uploads her {(encrypted)} data  $\cd$ and associated budget $\budget$ to a server. 
The server offers storage where the data owner can store their data.
It also offers protected computation resources, abstracted as a \emph{trusted execution environment, TEE} 
(\eg implemented using Intel SGX) and regular computation resources.
TEE can access server's storage and run DP code.
Our solution will also make use of a \emph{state continuity module (SCM)} 
that has the basic functionality as described in \cref{alg:scm}.

\subsection{Threat Model}\label{sec:tmodel}

\parhead{Trusted Components.}
We assume the data owner, TEE and SCM to be trusted.
That is, TEE provides isolation to the code and data in its runtime memory:
its code and data cannot be tampered nor observed.
However, once it crashes, it no longer has access
to its private memory. 
{We assume that a newly instantiated TEE has access to its own pseudo-random source,
independent of other instantiated TEEs. That is, even if TEEs were to use the same keys for encryption,
they would use independent sources of randomness to provide semantic security.}
Though there is much work on side-channel attacks
and protection guarantees of some instantiations of TEEs (\eg for DP algorithms~\cite{allen2019algorithmic}),
we consider side channels~\cite{gotzfried2017cache,brasser2017software} to be out of scope for this paper.
The code of~DP~$\mech$ is trusted and will answer queries with DP guarantees.
We refer to a TEE as \emph{valid} if it is running the code of a data owner 
(\eg it has passed a remote attestation check in case of Intel~SGX).

{SCM is trusted to provide state continuity and follow the functionality in \cref{alg:scm}.
That is, it will only update the state if it corresponds to the next
counter value from the one it stores locally. Note that SCM does not check who it receives
state updates from as long as they supply the correct counter value.
Our system will be able to detect if the state was updated by a malicious party.}

\parhead{Attacker.}
We assume that a data analyst is not trusted as otherwise the data owner would not use DP mechanisms to answer their queries.
The server is not trusted and may collude with the data analyst. 
For example, the TEE host could be curious about their customers' activities or could be compromised by malware (a common threat model of a TEE).
To this end, we consider an attacker who controls the data analyst, the server, and the network.
Though the traffic between the analyst and the TEE can be protected
by, for example, TLS, this will not protect against the above attacker.

\textbf{Attacker's goal} is to learn more information about the dataset than what is released to it via the DP querying interface.
For example, it will try to issue more queries than it can,
tamper with data that TEE stores in the external storage (\ie the persistent storage provided by the server), 
give the TEE old copies of the data it stored 
(\eg provide older copies of the budget if budget was stored outside of TEE's runtime memory,
thereby rolling back the budget),
snoop on the messages TEE sends to SCM, send messages to SCM on TEE's behalf 
and vice versa (\eg by sending its own messages or replaying TEE's and SCM's messages).
Given that the attacker controls the server, including the operating system,
it can arbitrarily crash and start TEEs, thereby aborting and restarting processes running in these TEEs.
Alternatively, it can also initialize TEEs with copies of the same persistent storage 
(\eg creating two TEEs with the same budget and sending them same or different queries, thereby executing a fork attack).
In \cref{sec:sgx-attack}, we show that these capabilities can be easily instantiated in practice to invalidate DP guarantees.
\subsection{Budget Attack}\label{sec:sgx-attack}

We now empirically show that Instantiation~2 described in the introduction
is vulnerable to a budget attack in the threat model outlined in the previous section, even when DP code runs in Intel SGX.
In this instantiation, a data owner uploads encrypted dataset to the cloud along with a file containing initial encrypted budget.
DP code is running within a TEE and the budget is protected with authenticated encryption
and saved to persistent storage.
A TEE reads the budget from a file,
answers a data analyst query, and writes the updated authentically encrypted budget back to the file.
As a TEE we use Intel Xeon E-2288G CPU, 
which supports SGX functionality.

\paragraph{Dataset and Query Model.}
We use 1000 records from the PUMS dataset~\cite{pums} in the attack, 
and the discrete Gaussian mechanism~\cite{canonne2020discrete} to protect query output.
We simulate a setting where data analyst
queries the PUMS dataset using count queries on the age and income columns
(\eg count \#records of a certain age).
We assume that a data owner sets the overall budget $\budget=10$ for the dataset.

\paragraph{Attack.}
Based on the above threat model,
 an attacker { can control the operating system of the server}.
Hence, it has the privilege to kill a process running in a TEE using the system's kill command
and initialize a new process. Since the budget is stored in a filesystem controlled by the attacker,
upon a file request from a TEE,
it can provide files of its choosing.
To this end, we implement an attacker who kills a process running in a TEE whenever
the remaining budget is exceeded.
The attacker then creates a new TEE.
When TEE requests a budget file, the attacker provides it with
the initial version of the encrypted and signed budget file that has the full budget
instead of the latest file with no~budget~left.
Since SGX cannot detect rollback attacks (discussed in~\cref{sec:tmodel}), it cannot
detect that it is being queried more times than what the budget~allows~for.

The attacker repeats the same query $N_R$ times to average out noise added by the DP mechanism.
To evaluate the effectiveness of the attack we use the relative mean squared error ($\RMSE$):
$\RMSE = {\sum(S_i-S'_i)^2}/{\sum S_i^2}$,
where $S_i$ is the value of the true answer to the~$i$th query, 
and $S'_i$ is the guess the attacker makes.

\paragraph{Attack Results.}
We plot the relationship between $\RMSE$
 and $N_R$ ranging from 1 to 10, for $\epsilon$ values 0.5 and 1, in~\cref{fig:reconstruct_pums_n}.
{As expected,} $\RMSE$ decreases as the noise cancels out with more queries.
As a baseline we use the error due to the Gaussian noise added by DP,
that is, when $N_R$ is 1.
For example, the error is around 0.25 and 0.1 for income for $\epsilon$ of~0.5 and~1, respectively.
Recall that smaller $\epsilon$
results in larger noise, hence, the error is higher for $\epsilon=0.5$.

\begin{figure}[t]
  \includegraphics[width=0.37\textwidth]{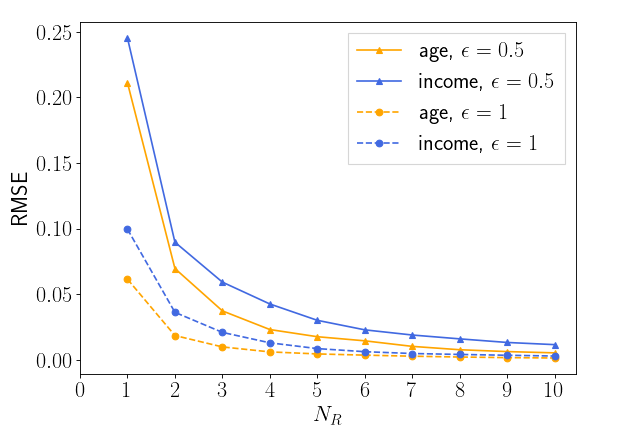}
  \centering
  \caption{
    The error (RMSE) of attacker's guess for the
    count query answer
    under rollback attack: the budget is reset when it is exceeded
    and each query is repeated
    $N_R$ times to average out the noise.
  }
  \label{fig:reconstruct_pums_n}
\end{figure}
\section{Design and security Goals}
Given the above threat model we wish to design a DP system that, at a high level, 
behaves as a trusted third party executing DP code would behave when set in an untrusted environment.
In particular, the system should answer analyst's queries
with DP guarantees and within privacy budget constraints, while also being able to recover from a crash.
We propose \emph{liveness}  and \emph{DP confidentiality} definitions to capture these properties.

\subsection{Liveness}
We define liveness as the ability of the system to recover in case of a crash.
For DP systems, that means being able to recover its latest privacy budget
and, if the budget is not exceeded, to continue answering analyst's queries.
Note that the attacker can always delete data stored by a TEE externally, 
refuse to start a TEE or block
any messages between TEE and the rest of the system.
Hence, like other distributed systems, we can guarantee liveness only when a server is honest~\cite{angel2023nimble,parno2011memoir,strackx2016ariadne}.
We discuss this further in Section~\ref{sec:discussion}.

\begin{definition}\label{def:crash-resilient}
    A DP system provides \emph{liveness} if it can recover its latest privacy budget
    after a crash.
\end{definition}

\subsection{DP Confidentiality}
This property ensures that the attacker learns about the dataset~$\cd$ as much
as it would when interacting with a trusted DP curator who faithfully maintains a privacy budget.
That is, every query is answered using a DP mechanism, the budget is correctly adjusted after
each answered query and queries that would exceed the budget
return answer~$\bot$. 

Formally stating DP confidentiality is not straightforward for two reasons. Non-trivial DP mechanisms
are randomized and the order in which queries are answered may alter their answer 
(\eg budget may expire earlier if more expensive queries are answered first).
Hence, even if a system provides DP guarantees, the answers may be different 
from another DP system due to different randomness used or the order
in which queries are answered.
We overcome this challenge by defining a transcript that captures 
each requested query, including its answer and the remaining budget.
We then say that a system provides DP confidentiality, if it produces a transcript that is equivalent to a transcript 
produced by the ideal system (\ie trusted curator). 

\paragraph{Ideal DP System.}
\label{sec:idealdp}
Our system aims to resemble the following ideal setting.
The data owner is the only one that has access to the dataset and the privacy budget.
It sets the budget $\bm{\budget}$ and answers data analyst's queries in a sequential order
as long as there is enough budget to answer the queries. If the budget is not enough, it sends
$\bot$ as a reply. Given a query~$\query$ and the remaining budget, 
$\spec$ returns the DP mechanism~$\mech$ and~$\epsilon$ required to answer~$\query$. The query is answered using this mechanism if there is enough budget.
The data analyst learns only the DP answers to the queries.

We present this ideal DP functionality in~\cref{alg:idealdp}. It takes
a sequence of queries $Q$ and processes them in the order given in the sequence.
It then constructs a transcript  $T_\mathsf{ideal}$ where each entry is 
of the form $(\query, \answer, \budget)$ where $\answer$ is the answer to the query 
and $\budget$ is the updated budget after the query.

\begin{algorithm}[b]
\caption{Ideal DP Functionality for answering a sequence of queries $Q$ on dataset $\cd$ given total budget $\bm{\budget}$.}
\begin{algorithmic}
\Function{$\mathsf{IdealDP}(\cd, \bm{\budget}, Q)$}{}
\State $i \gets 1$ \Comment{Query counter}
\State $\budget \gets \bm{\budget}$
\For {$\forall \query \in Q$} \Comment{Process queries in sequence order}
\State $\epsilon, \mech \gets \spec(\query, \budget)$
\If{$\budget - \epsilon \ge 0$}
\State $\answer \gets \mech(\cd)$
\State $\budget \gets \budget - \epsilon$ 
\Else
\State $\answer \gets \bot$ \Comment{Query exceeds the budget}
\EndIf
\State $T_\mathsf{ideal}[i] \gets (\query, \answer, \budget)$
\State $i \gets i +1$
\EndFor 
\EndFunction
\end{algorithmic}
\label{alg:idealdp}
\end{algorithm}    

Note that differentially-private algorithm $\mech$ is a randomized algorithm.
Hence, even transcripts of two ideal functionalities
on the same set of queries are likely to produce different answers (but same budget updates).
To this end we say that two transcripts~$T_1$ and~$T_2 $ are equivalent as long as the answers
they produce are both~$\bot$ (\ie there was not enough budget to answer this query) or they are both elements of the same distribution $\mech(\cd)$.
\begin{definition}[Equivalent transcripts]
For a dataset $\cd$,
transcripts~$T_1$ and~$T_2 $ are equivalent, denoted as $T_1 \approx T_2$, 
if they have the same number of entries and for every entry $i$:
\begin{itemize}
\item
$T_1[i].\query =  T_2[i].\query$, $T_1[i].\budget =  T_2[i].\budget$ and
\item Either $T_1[i].\answer$, $T_2[i].\answer$ are both $\bot$ or from $\mech(\cd)$.
\end{itemize}
\end{definition}

We then say that a system provides DP confidentiality for a sequence of queries, 
if it updates the global budget and answers queries as a DP system running on a single trusted machine would. 
\begin{definition}[DP Confidentiality]
\label{def:securedp}
Let $Q$ be a sequence of all queries answered by a DP system
and $T$ be the corresponding transcript.
DP system provides DP Confidentiality if there is an order on queries in $Q$, $\pi$, such that
the generated transcript~$T$ is equivalent to the transcript,
$T_\emph{ideal}$, generated by the ideal DP system on $\pi(Q)$  (\ie $T_\emph{ideal} \approx T$).
\end{definition}

Note that in the Ideal DP system the budget cannot be tampered with. As a result, it is protected from rollback attacks.
Moreover, since there is only one copy of the budget and all queries are answered against it synchronously, the budget cannot be forked.
Hence, a DP system that provides DP confidentiality (i.e., behaves the same as the Ideal system) also protects against budget attacks.
\section{\dpsystem Protocol}
\label{sec:proto}

\begin{figure*}[t]
\includegraphics[width=0.9\textwidth]{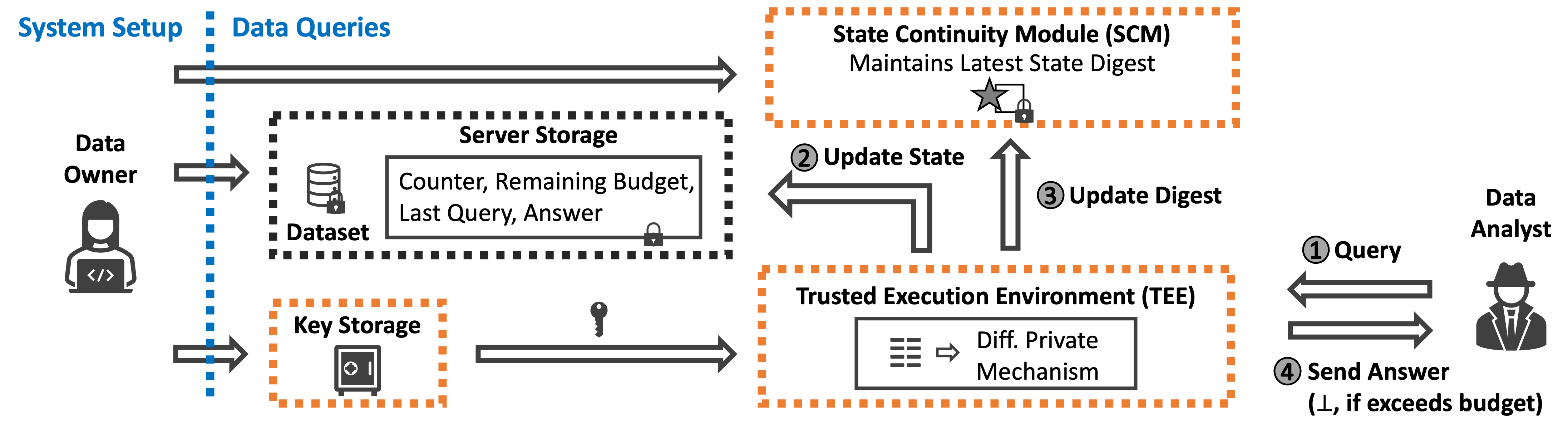}
\centering
\caption{
The overview of \dpsystem where a TEE answers analyst's queries using DP mechanisms on data owner's behalf.
During System Setup the data owner uploads the encrypted dataset and state containing the initial privacy budget to the server.
It uploads the hash of the encrypted state to a State Continuity Module (SCM) 
and secret keys to the key storage. 
When TEE is initialized (not depicted in the illustration), it loads the keys, the data and the state. 
It checks that the state is most recent by comparing it against the digest maintained by the SCM. 
When the analyst sends the query (Step 1), TEE computes a DP answer and stores the query counter, 
the updated budget, the query and the answer in persistent storage (Step 2).
It then tries to update the SCM with this new state (Step 3). 
If the update is successful (\eg no other TEE has updated the state in the meanwhile), 
the TEE sends the answer to the analyst. The data owner and the components in orange are assumed to be trusted.
} 
\label{fig:dpsystem}
\end{figure*}

Existing DP systems can guarantee DP confidentiality only if the corresponding system does not crash (\cref{tab:budget-impl}).
If they were to provide liveness by recovering from a crash, 
they would be vulnerable to budget attacks as described in the introduction and shown in~\cref{sec:sgx-attack}.
To this end, we propose \dpsystem (illustrated in \cref{fig:dpsystem}), a secure
protocol for instantiating global DP model in an untrusted environment.

At a high level \dpsystem achieves liveness by storing the current privacy budget in a persistent but insecure storage.
It achieves DP confidentiality by processing decrypted data only within a TEE.
To ensure that the budget it reads from insecure storage is most recent,
it uses a state continuity module (SCM) to maintain a digest of the state that includes the budget.

\subsection{Overview}
\label{sec:overview}
There are two challenges in integrating an SCM in a DP system. The first one is what components
to persist in a state and the second one is in which order to update this state and answer the query.
Below we describe the intricacies behind these choices.

DP budget is a natural candidate to be persisted in a state.
However, answering queries before the state digest is updated at the SCM,
would lead to a system vulnerable to rollback attacks.
Hence, the query must be answered after the budget is updated.
However, a TEE may crash after updating the budget but before sending the answer to the analyst.
Once the system restarts, the query would remain unanswered even though it has used up the budget.

To avoid the depletion of a budget above, one may suggest to augment the state with the
last query. Then, when the TEE restarts, the answer to the recorded query could be recomputed without
using the budget again.
However, this opens a possibility of a differencing attack since a previously computed answer may have been sent before TEE crashed.
Since DP mechanisms are randomized, it is likely the attacker will get a different answer 
than before. It~can then crash the TEE after every answer 
(i.e.,~mimicking the attack in~\cref{sec:sgx-attack} against a system that did not use SCM).
To mitigate this, \dpsystem records the budget, the query and the answer in its state and re-sends the answer instead of recomputing~it.

The last design choice arises from the order in which the state and its digest need to be updated at the untrusted local storage and the SCM, respectively.
If the SCM is updated before the local state is updated and TEE crashes, no other TEE will be able to restart as the non-hashed version of a state is lost. However, if the local state is updated first and the TEE crashes, a restarted TEE will be out of sync with SCM. \dpsystem adopts the latter option while still being secure. In particular, it allows a TEE to update the SCM using the (authenticated) version of the state stored at the server as long as it is at most one version higher than SCM.
To ensure that the attacker does not start multiple TEEs with the same budget and launches a fork attack, no query (including the one recorded in the state) is answered until the SCM is in-sync
with the local state. Here, SCM guarantees that only one of the TEEs will be able to update the state,
while all others will abort.

\subsection{System State}
\label{sec:state}

\label{sec:state-component}
\dpsystem's state consists of
a tuple $s = \query\|\answer\|\budget$ and
a query counter/identifier $\id$ where:
\begin{itemize}
\item $\query$ is the most recently submitted query from the data analyst, including all the arguments required for a query (\eg query type). During initialization $\query$ is set to $\bot$.
\item $\answer$ is the answer to $\query$. It is set to $\bot$ at the initialization or if there was not enough budget to answer the query.
\item $\budget$ is the remaining budget, initially set to the total budget~$\bm{\budget}$ allocated by the data owner for the dataset~$\cd$.
\item $\id$ is the counter of all answered queries, including those that returned $\bot$ due to budget constraints. This is also a state identifier as the state is updated for every query.
\end{itemize}

System digest $h$ is set to $\hash(\encs)$ where $\hash$ is a collision-resistant hash function
and $\encs$ is the encrypted version of the state.
We use hash of encrypted state since the state contains secret information $\answer$. 
If the hash is received before the update to the state has been recorded, 
one could use a dictionary-like attack to guess the answer.

\subsection{Protocol Entities}

\dpsystem consists of the following entities (see~\cref{fig:dpsystem}).
\begin{itemize}[nosep]
    \item (Trusted) Data owner of the static dataset $\cd$.
    \item (Untrusted) Data analyst that submits queries about~$\cd$.
    \item Trusted Execution Environment (TEE) that answers queries using a DP mechanism.
    \item State continuity module (SCM) (defined in~\cref{sec:scm}) that keeps track of the most recent state digest~$h$.
    \item (Untrusted) server that stores static encrypted and integrity protected dataset~$\enccd$. 
    It also stores the encrypted system state~$\encs$ that is updated by a TEE.
\end{itemize}

The TEE interacts with server storage ($\ST$) via $\ST.\Load(\text{tag})$ and $\ST.\Store(\text{tag}, \text{object})$
calls that load an object stored with the provided \text{tag} and store an object under the specified tag, respectively.

The TEE interacts with the state continuity module ($\SCM$) via 
$\SCM.\UpdateState(\nonce, \id, \newstate)$ and $\SCM.\ReadState(\nonce)$ that are defined in~\cref{sec:scm}.

\subsection{System Setup}
The data owner sets up the system using dataset $\cd$ and initial budget~$\bm{\budget}$ as follows
{(the pseudocode is provided in \cref{alg:setup})}.
It sets initial state~$s_0 = \bot\|\bot \|\bm{\budget}$ with identifier 0.
Since the data owner provides the dataset that needs to be protected,
she
generates two symmetric keys~$\datakey$ and~$\statekey$ of 
an Authenticated Encryption scheme for encrypting the dataset~$\cd$ 
and state $s_0$ respectively (\cref{line:key1} and~\ref{line:key2} in \cref{alg:setup}).
The data owner also generates a signing and verification key pair~$\sk_s$ and~$\vk_s$ that TEEs will use to verify that state digests are authentic and have not been tampered with.

The data owner sends a signature $\sigma \gets \Sign(\sk_s,(0,h_0))$ 
over a hash of the encrypted state $h_0 \gets \hash(\encs_0)$ and a random~$\nonce$ to the SCM
via $\InitState(\nonce, (\sigma_{h_0}, h_0))$.
In return it obtains a signed reply from SCM. To prevent a person-in-the-middle or replay attacks from the SCM,
the data owner checks that the correct state hash was loaded on SCM and 
that it is communicating with the SCM it trusts (using $\vk_\scm$) and uses $\nonce$ to check
that the reply is current.
She then uploads the encrypted dataset $\enccd$, the encrypted state $\encs_0$, 
and a signature $\sigma_{h_0}$ to the untrusted storage at the server. 

The data owner needs to set up a way for a TEE, once initialized, 
to be able to access the keys $(\datakey, \statekey, \sk_s, \vk_s, \vk_\scm)$ so that it can act on her behalf
and answer data analyst's queries.
There are several ways to do it, including the following two.

We assume that the DP system code, including $\spec$ and DP mechanisms for answering the queries, 
is public and can be loaded by any TEE and is amenable to remote attestation.
The data owner could initialize a TEE with the DP systems' code and transfer the keys after checking
that the TEE is legitimate via remote attestation. The TEE could then seal and store them in untrusted storage
so that it can recover them without contacting the data owner if the crash occurs.
Alternatively, the data owner can use a key management service 
(\eg Azure Key Vault\footnote{\url{https://learn.microsoft.com/en-us/azure/key-vault/managed-hsm}})
and ensure that only TEEs with DP code as loaded by the data owner can access the keys.
To this end, we abstract the process by which the TEE obtains keys via
$\mathsf{GetKeys}(m)$ where $m$ is TEE's proof for remote attestation.

\begin{algorithm}[t]
\caption{{The data owner sets up \dpsystem to answer queries using her dataset $\cd$ with 
budget~$\bm\budget$. For state continuity, the system is setup to use $\SCM$ whose verification key is~$\vk_\scm$.}}
\begin{algorithmic}[1]
\Function{$\Setup$}{$\cd,\bm{\budget}, \vk_\scm$}
\State $\datakey\gets\genSymKey()$ \label{line:key1} \Comment{Encryption key for the dataset.}
\State $\statekey\gets\genSymKey()$ \label{line:key2} \Comment{Encryption key for the state.}
\State $\sk_s,\vk_s\gets\genAsymKey()$ \Comment{Asymmetric keys for signing and verification of the state.}

\State $s_0 \gets \bot\|\bot \|\bm{\budget}$ \Comment{Set initial state.}
\State $\encs_0 \gets \Encrypt(\statekey, s_0)$ and $h_0 \gets \hash(\encs_0)$
\State $\sigma_{h_0} \gets \Sign(\sk_s,(0,h_0))$ \Comment{Sign the hash of the encrypted initial state.}
\State Generate random $\nonce$
\State $\sigma_\scm \gets \SCM.\InitState(\nonce, (\sigma_{h_0}, h_0))$
\If{$\Verify(\vk_\scm, \sigma_\scm, 0, (\sigma_{h_0}, h_0), \nonce)$ fail} 
    \State Abort \Comment{State initialization at SCM failed.}
\EndIf
\State $\enccd\gets \Encrypt(\datakey, \cd)$
\State $\ST.\Store(\text{`data'}, \enccd)$ and  $\ST.\Store(\text{`state'}, (\sigma_{h_0}, \encs_0))$ \Comment{Upload the encrypted dataset and the encrypted state to the storage.}
\State \Return 
\EndFunction

\end{algorithmic}
\label{alg:setup}
\end{algorithm}

\begin{algorithm*}[t]
\caption{
TEE Initialization.
If the protocol succeeds,
TEE is loaded with keys, the dataset and the budget for answering DP queries.
}\begin{algorithmic}[1]
\Function{Init}{\,}
\State $(\datakey, \statekey, \sk_s, \vk_s, \vk_\scm) \gets \mathsf{GetKeys}(m)$~\Comment{Get signing \& encryption keys,
using TEE attestation based on measurement~$m$.}
\State Generate random $\nonce$
\State $\sigma_\scm, (\id, (\sigma_h, h)) \gets \SCM.\ReadState(\nonce)$~\Comment{Get state from SCM.}

\If{$\Verify(\vk_\scm, \sigma_\scm, \id, (\sigma_h, h), \nonce)$ or $\Verify\left(\vk_s, \sigma_h, (\id, h)\right)$ fail} 
    \State Abort \Comment{Latest state is either not from a valid SCM or has old $\nonce$ or the state was not from a valid TEE.}%
\EndIf

\State $\sigma, \encs \gets \ST.\Load(\statetext)$  \Comment{Load latest encrypted state from storage.}
\If{$\Verify(\vk_s, \sigma, \id, \encs)$ and $\hash(\encs) = h$}
\State Go to~\cref{line:finishinit} \Comment{State digests loaded from SCM and generated from the storage state are in sync.}
\Else
\If{$\Verify(\vk_s, \sigma, \id+1, \encs)$}
 \Comment{TEE may have crashed before SCM received its updated state.}
\State  $h \gets \hash(\encs)$ and $\sigma_h \gets \Sign\left(\sk_s, (\id, h)\right)$
\State Generate new random $\nonce$
	\State $\sigma_\scm, \msg \gets \SCM.\UpdateState(\nonce, \id, (\sigma_h, h))$
	\If{$\Verify(\vk_\scm, \sigma_\scm,  \msg, \nonce)$ fails or $\msg \neq \mathsf{Ack}$}
		\State Abort \Comment{Could not update the state at SCM (\eg because SCM has a newer state).}\label{line:abort3}
		\EndIf
\Else
  \State Abort \Comment{Loaded state cannot be authenticated with expected $\id$.}\label{line:diffids}

\EndIf

\EndIf 
\State$\query, \answer, \budget  \gets \Decrypt(\statekey, \encs)$\label{line:finishinit}
\If{$\id \neq 0$} \Comment{Check if answers to any queries have been computed.}
 	\State Send $\id, \query, \answer$ to the data analyst   \label{line:sendanswer1}
\EndIf 
\State $\enccd \gets \ST.\Load(\datatext)$
\State $\cd \gets \Decrypt(\datakey, \enccd)$ \Comment{Load and decrypt data}
\EndFunction
\end{algorithmic}
\label{alg:init}
\end{algorithm*}

\paragraph{TEE Initialization.}
The $\Init$ algorithm is run inside of a TEE every time a TEE is started.
During the protocol, TEE interacts with the untrusted server storage.
By the end of the initialization, TEE's local private memory contains the keys and the state components as described above.
The pseudocode of this protocol is shown in \cref{alg:init} and described below.

During initialization, TEE loads the signing and encryption keys either from sealed state or from a key vault,
ensuring that only TEE's running legitimate code can load the keys.
The TEE then loads its state from the server storage and requests a signed state digest from the SCM. 
For freshness, it requests the SCM to sign the $\nonce$ sent by the TEE.
If the signature and freshness checks succeed, 
the TEE tries to load the state. It compares the digest of the state it loads from the server storage with the digest
stored at the SCM. If the digests match and SCM's digest was uploaded earlier by a valid TEE (as verified using digital signatures with $\vk_s$),
the TEE will use the loaded state as its local state.
If the digests do not match,
TEE attempts to update the state digest at SCM. If the update is successful, it proceeds similar to above.
Otherwise it aborts as SCM must have been updated with a more recent state by another TEE.

The TEE checks if there is a query in the current state. 
If it is the first time TEE is started since the data owner setup the system,
there will be no pending query answers.
The TEE then loads and decrypts the dataset $\cd$.
If the dataset is too large to fit in TEE, this could be replaced with a procedure that
will allow for access to dataset $\cd$ in chunks 
(a common trick for constrained memory devices~\cite{usenix16} that will not affect a DP system to which the data will appear to be coming as one chunk).

\subsection{Replying to Queries}
A data analyst interacts with \dpsystem by sending queries to the server hosting it.
The server sends the query to an initialized TEE if there is one or initializes a new one. 
The TEE answers queries as long as there is enough budget left.
The pseudo-code for replying queries appears in~\cref{alg:reply}.

Let $\query$ be a query sent by the analyst.
TEE runs $\spec$ to determine the DP mechanism for query $\query$, $\mech$, and query budget,~$\epsilon$.

If the budget is enough to answer the query,~$\mech$ computes~$\answer$; otherwise, it returns~$\bot$.
Answering the query should be done without observable side channels
as otherwise information can be leaked if the TEE is interrupted and budget is not updated.
For example, if answer length differs depending on the dataset,
the answer needs to be padded to the length of the longest possible answer.

If the query can be answered without exceeding the budget, TEE sets new budget to~$\budget - \epsilon$.
Otherwise, budget stays the same.
Regardless of the query output,
TEE updates the state with an incremented state counter, the budget, last query and the answer.
It encrypts the state and stores it in the storage, before requesting SCM to update the digest.
If SCM can update the state digest (\ie SCM's last state identifier is one smaller than what TEE is trying to update it with)
and SCM's reply is fresh (\ie it is signed with TEE's freshly generated nonce),
the TEE sends the answer to the analyst and waits for further queries.
If the state at SCM cannot be updated, TEE aborts as its current state is out of sync with~SCM.

\begin{algorithm*}
\caption{
Pseudocode for answering queries inside of a TEE. 
Given a query, $\spec$ returns the DP mechanism to compute the answer and the budget required for this query.
After TEE is initialized 
in~\cref{alg:init}, it answers queries from a data analyst.
For each query~$q$, it associates an identifier and computes an answer.
If there is enough budget to answer the query, the budget is updated.
Otherwise, the answer is set to~$\bot$ and the budget stays the same.
Query, answer, its identifier and the budget are saved at the storage. 
If SCM accepts this state update, the answer is sent to the data analyst. 
Otherwise, TEE aborts as SCM may have a more recent state update.}
\begin{algorithmic}[1]

\Function{$\ReplyQueries(\query)$}{}
\State $\epsilon, \mech \gets \spec(\query,\budget)$ \Comment{Determine query mechanism and the privacy loss as per the owner's specification for this query.}
\If{$\budget - \epsilon \ge 0$}
\State $\answer \gets \mech(\cd)$ \Comment{Compute DP answer.}
\State $\budget \gets \budget - \epsilon$ 
\Else
\State $\answer \gets \bot$ \Comment{Not enough budget to answer this query.}
\EndIf

\State $\id \gets \id + 1$
\State $s \gets \query\|\answer \| \budget$
\State $\encs \gets \Encrypt(\statekey, s)$ and $h \gets \hash(\encs)$ \label{line:encrypt}
\State  $\sigma \gets \Sign(\sk_s, \id, \encs)$ and $\sigma_h \gets \Sign\left(\sk_s, (\id, h)\right)$
\State $\ST.\Store(\statetext, (\sigma, \encs))$ \Comment{Save the state together with the answer and the unique identifier.}
\State Generate random $\nonce$
\State $\sigma_\scm, \msg \gets \SCM.\UpdateState\left(\nonce, \id,(\sigma_h,h)\right)$ \label{line:store} \Comment{Attempt to update the state at counter $\id$.}
\If{$\Verify(\vk_\scm, \sigma_\scm,  \msg, \nonce)$ or $\msg \neq \mathsf{Ack}$ fail}
\State Abort \Comment{TEE could not update SCM's state. Abort as TEE is not in sync.}\label{line:abortreply}
\EndIf
\State Send $\id, \query, \answer$ to the data analyst\label{line:sendanswer2}
\EndFunction
\State Abort, if $\budget = 0$.
\end{algorithmic}
\label{alg:reply}
\end{algorithm*}

\subsection{Recovery}
If a TEE crashes, its local state in runtime memory is lost as well.
To recover, a new TEE needs to start from the state saved by the previous TEE in the untrusted persistent storage.
The protocol for starting a new TEE is the same as the one to initialize a TEE in~Algorithm~\ref{alg:init}.
Since the server is responsible for starting and running TEEs, it may crash TEEs, create several TEEs (e.g., trying to execute a fork attack)
or attempt to give obsolete states (e.g., trying to rollback the budget).
To this end, the protocol ensures that the state TEE loads from the server is indeed the latest.

The TEE checks if there is a query in the current state 
(\ie it is not the first time that a TEE of the DP system has been started).
If there is one, it sends previously recorded response $\answer$ to the data analyst.
This ensures that the analyst gets the answer if a TEE has crashed after it updated the budget, but before it sent the answer to the analyst.
This may result in the analyst getting multiple copies of the same answer if the TEE crashed after the answer was sent.
We made this choice to avoid budget depletion for the last query that has already consumed the budget.

The TEE then loads the dataset and waits for analyst's queries.
Note that \dpsystem can restart as long as TEE's requests to get the keys, SCM state and local storage are met by the server.
We discuss recovery further in this section.
 
\subsection{Discussion}
\label{sec:discussion}

\newcommand{\aux}{\mathsf{aux}}

\paragraph{Management of Privacy Budget.}
\dpsystem relies on privacy budget management provided by the data owner in functionality~$\spec$.
$\spec$ decides which mechanism $\mech$ to use to answer the query and how much privacy loss the query will incur.
Depending on the use case, $\spec$ could be instantiated with advanced privacy management schemes
that maximize the number of queries that can be answered by efficiently allocating privacy budget across queries and applications~\cite{cohere}.
Some of these DP schemes (\eg private multiplicative weights~\cite{dwork2014algorithmic}) keep a state between queries.
To support these mechanisms, we can abstract their state as an additional input $\aux$ to $\spec$ (\ie $\spec(\query, \budget, \aux)$) that
will need to be persisted in system's state~$s$ between queries.

We assumed that there is only one data analyst. The system can be extended with more.
For example, the data owner can setup the TEE code to answer
queries only from a pre-approved set of analysts by supplying their public keys.
Though this would not change our threat model that assumes collusion between analysts and the server,
it can assist in a more efficient budget allocation across the analysts~\cite{DBLP:journals/pacmmod/ZhangH23}.

\paragraph{Multiple Datasets.}
We can consider two settings. The first one, where a data owner has several datasets 
and wants to setup \dpsystem for each one. Our protocol can be adopted to support this 
by associating each dataset and its state with a label, where the label is unique for each dataset. 
Then, SCM needs to maintain states along with their labels such that one can query 
and update state digests by using their labels (\eg Nimble~\cite{angel2023nimble} provides labeling functionality).
The second setting is where a single system may need to access multiple datasets, 
each with its own budget (\eg in the Local DP setting). In this case, as for the first setting, 
\dpsystem will need to maintain a state per each dataset.
Additionally, it will also need to correctly load each state during the initialization 
and release query answers to a data analyst only after budgets for all datasets have been successfully updated at the~SCM.

\paragraph{Liveness vs.~Confidentiality.}
Consistency, Availability, and Partition tolerance (CAP) theorem states that one has to give up either safety or availability
in an unreliable distributed system~\cite{10.1145/343477.343502}.
Since our priority is to provide dataset privacy according to DP guarantees,
if \dpsystem crashes, it can recover as long as the server behaves according to the protocol.
As it is in the interest of the service provider to behave correctly, 
the scenario where a restarted TEE cannot access SCM or latest stored state is unlikely to happen. 
Nevertheless, the system can try to recover by relying on the intervention from the data owner. 
For example, if the dataset is lost, the owner could upload it to the server again; 
if the state is lost at the server, the data owner could retrieve it from SCM 
(in this case SCM would need to store encrypted state instead of its digest). 
However, if SCM is not available, the owner will have to decide on the value of the new budget, 
which could lead to a budget rollback since the owner does not know the value of the budget before the crash.

\section{\dpsystem Analysis}
\label{sec:analysis}

We now prove that \dpsystem provides liveness and DP confidentiality,
assuming TEE provides isolation guarantees and SCM provides state continuity guarantees as specified in~\cref{sec:scm}.

\subsection{Liveness}

Proofs of both properties rely on the integrity, authenticity and recency of the states loaded 
by a TEE summarized in the following lemma.

\begin{lemma}
\label{lemma:integrity}
Assuming the integrity and confidentiality properties of the TEE,
the integrity properties of SCM and standard cryptographic assumptions for signatures, authenticated encryption
and the hash function,
\dpsystem has the following state integrity properties:
\begin{enumerate}
\item if a TEE is successfully initialized, 
then it is loaded with a state that was created either by a valid TEE or by the data owner;
\item if a TEE is successfully initialized with identifier $\id$ 
then the last state stored by a valid TEE at SCM had identifier~$\id$
or~$\id-1$.
\end{enumerate}
\end{lemma}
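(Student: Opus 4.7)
The plan is to reduce both integrity claims to three underlying primitives: EUF-CMA unforgeability of the signature scheme with keys $(\sk_s, \vk_s)$, INT-CTXT of the authenticated encryption under $\statekey$, and the functional guarantees of $\SCM$ from~\cref{alg:scm}, tied together by the fresh $\nonce$ that the TEE bundles into every SCM interaction. These primitives say that nothing outside a valid TEE (or the data owner in $\Setup$) can produce a ciphertext with a signature that verifies under $\vk_s$, and that nothing outside $\SCM$ can forge a nonce-bound state-digest response under $\vk_\scm$.

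For Part 1, I would walk through $\Init$ in~\cref{alg:init} and show that at~\cref{line:finishinit} the TEE always holds a ciphertext $\encs$ together with a signature $\sigma$ verifying under $\vk_s$ on some $(\id', \encs)$. The ``in-sync'' branch reaches \cref{line:finishinit} only after $\Verify(\vk_s, \sigma, \id, \encs)$ succeeds and $\hash(\encs) = h$; the ``one-ahead'' branch reaches it only after $\Verify(\vk_s, \sigma, \id+1, \encs)$ succeeds and the SCM acknowledgement and nonce checks pass, with all other paths aborting at~\cref{line:abort3} or~\cref{line:diffids}. Because $\sk_s$ is generated inside $\Setup$ and, by TEE attestation in $\mathsf{GetKeys}$, is only disclosed to code whose measurement matches the data owner's DP system, signature unforgeability forces $\sigma$ to have been produced either by the data owner during $\Setup$ (the initial $s_0$) or by a valid TEE running $\ReplyQueries$ or the update step of $\Init$. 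Authenticated encryption under $\statekey$ then rules out any mismatch between the plaintext the signer actually encrypted and the state the TEE decrypts.

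For Part 2, I would combine Part 1 with $\SCM$'s counter semantics. Because $\SCM$ increments $\id_\scm$ only inside a successful $\UpdateState$, and every accepted payload is authenticated by a signature under $\sk_s$, the sequence of states ever stored at $\SCM$ corresponds one-to-one with a chain of valid-TEE-authored updates rooted at the data owner's $\InitState$. In the in-sync branch of $\Init$ the hash $h$ read from $\SCM$ matches $\hash(\encs)$ for a state with identifier $\id$, so by collision resistance of $\hash$ the state currently stored at $\SCM$ has identifier $\id$. In the one-ahead branch the initializing TEE itself drives $\SCM$ from $\id-1$ to $\id$, and only continues when the acknowledgement and $\nonce$ check succeed, which by $\SCM$ integrity implies $\id_\scm$ actually advanced to $\id$. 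The ``$\id-1$'' possibility in the lemma covers exactly the transient window between a $\ReplyQueries$ writing the $\id$-th state to storage and its matching $\UpdateState$ being accepted by $\SCM$.

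The step I expect to be most delicate is freshness: since the attacker controls the network, replays of stale $\SCM$ responses or stale acknowledgements must be excluded, otherwise a TEE could be tricked into accepting an outdated digest. I would discharge this by tracking each freshly sampled $\nonce$ and invoking EUF-CMA under $\vk_\scm$ to argue that any $(\sigma_\scm, \cdot, \nonce)$ accepted by the TEE must be the unique $\SCM$ reply to the matching request in the current session. A secondary bookkeeping obligation is to enumerate the abort paths (\cref{line:abort3}, \cref{line:diffids}, \cref{line:abortreply}) and verify that every non-aborting execution of $\Init$ lands in exactly one of the two cases analysed above; once that case split is complete, both (1) and (2) follow.
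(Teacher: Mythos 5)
Your proposal follows essentially the same route as the paper's own proof: Part~1 from the fact that $\sk_s$ is held only by the data owner and valid TEEs together with the signature checks in $\Init$, and Part~2 from the case split between the in-sync branch and the one-version-ahead branch combined with SCM's rule of accepting only counter $\id_\scm+1$. Your write-up simply makes explicit the cryptographic reductions (unforgeability, authenticated encryption, collision resistance, nonce freshness) that the paper's proof invokes informally, so it is correct and not a different approach.
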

\begin{proof}
The first property holds since the state stored at the storage module is signed using the key $\sk_s$ of the data owner.
By design only valid TEEs can access this key. The initialization in~\cref{alg:init} is successful only if state verification succeeds.

We now prove the second property. During initialization~(\cref{alg:init}) a TEE reads the state digest from SCM.
It checks that the received digest $h$ is from a valid SCM and fresh 
(checked using signature $\sigma_\scm$, verification key $\vk_\scm$ and $\nonce$). 
It also ensures that the digest was uploaded by a valid TEE using $\vk_s$.
It then compares the digest to the hash of the state loaded 
from the storage (based on the first property it is an authentic state but may be stale). 
If the digests are equal, initialization is successful and the second property holds.
If they are not equal, TEE tries to update the SCM. The SCM accepts an update 
only if its last recorded state digest has an identifier one less than the one sent by the TEE. 
It rejects an update if there are other entries associated with the same $\id$. 
Otherwise, it updates its state and sends an acknowledgement.
\end{proof}

\newcommand{\TEE}[1]{\emph{TEE}$_#1$\xspace}

\begin{lemma}
\label{lemma:live}
\dpsystem provides \emph{liveness} (Definition~\ref{def:crash-resilient}) in a non-adversarial environment,
that is, when a restarted TEE is given the latest state stored by a valid TEE and receives
a response from a valid SCM.
\end{lemma}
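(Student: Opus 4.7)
The plan is to unfold the \Init procedure of \cref{alg:init} and \ReplyQueries of \cref{alg:reply} in the non-adversarial setting and verify that neither aborts, so a restarted TEE reaches the query-answering loop and every fresh query goes through. I will use \cref{lemma:integrity} implicitly to argue that all signature and hash checks succeed on authentic inputs, so the only remaining question is whether the identifiers line up.

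First I would enumerate the possible states in which the previous valid TEE could have left the system at the moment of its crash, by inspecting the four relevant program points in \cref{alg:reply}: (i) before \cref{line:store} (state $s$ not yet written to storage); (ii) after the local store at \cref{line:store} but before the SCM reply is received; (iii) after SCM returns $\mathsf{Ack}$ but before the answer is sent on \cref{line:sendanswer2}; and (iv) after the answer has been sent. Writing $\id_\scm$ for SCM's counter and $\id_\ST$ for the identifier of the state currently in $\ST$, cases (i) and (iv) give $\id_\ST = \id_\scm$, while cases (ii) and (iii) give $\id_\ST = \id_\scm + 1$. In all cases the stored state is authentic (it was produced by a valid TEE or, for the very first start, by the data owner in \Setup), so in the non-adversarial environment the restarted TEE actually reads these objects.

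Next I would run \cref{alg:init} against each case. In cases (i), (iii) and (iv) the digest $h$ returned by $\SCM.\ReadState$ equals $\hash(\encs)$ for the $\encs$ loaded from $\ST$, so the \textbf{if}-branch at the hash comparison succeeds and control jumps to \cref{line:finishinit}. In case (ii) the comparison $\hash(\encs) = h$ fails, but the stored state carries identifier $\id + 1$ with a signature that verifies under $\vk_s$; the TEE recomputes $h$, signs $(\id, h)$, and calls $\SCM.\UpdateState(\nonce, \id, (\sigma_h, h))$. Because $\id_\scm + 1 = \id$ by construction and the SCM is valid, the call returns $\mathsf{Ack}$ and the abort on \cref{line:abort3} is not taken. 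The abort on \cref{line:diffids} is only reachable when the storage identifier differs from both $\id$ and $\id+1$, which is excluded by the case analysis above. Hence \Init always reaches \cref{line:finishinit}, resends any previously recorded answer on \cref{line:sendanswer1} (which is harmless, see the Repeated Answers observation), and decrypts $\cd$ into TEE memory.

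Finally I would check \ReplyQueries: the assumption that SCM and the storage behave correctly means $\ST.\Store$ succeeds, the subsequent $\SCM.\UpdateState$ call sees the correct successor identifier $\id+1$ and returns $\mathsf{Ack}$, the signature $\sigma_\scm$ verifies, and the answer is delivered on \cref{line:sendanswer2}. By \cref{def:crash-resilient}, this establishes liveness. The main obstacle is really just case~(ii), where the local storage and the SCM are out of sync by one step; the rest is a direct unfolding of the two algorithms. The proof therefore hinges on the design choice of storing to $\ST$ strictly before calling $\SCM.\UpdateState$, which guarantees that any surviving authentic state in storage is at most one identifier ahead of SCM, which in turn is exactly the update SCM is willing to accept.
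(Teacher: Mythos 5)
Your proof is correct and follows essentially the same route as the paper's: both arguments reduce liveness to showing that none of the aborts in \cref{alg:init} can fire in the non-adversarial setting, with the crux in both being that the local $\ST.\Store$ precedes the $\SCM.\UpdateState$ call, so the authentic state surviving in storage is at most one identifier ahead of SCM --- exactly the one-step update SCM is willing to accept (the paper organizes the cases by abort site, you organize them dually by crash point, and you additionally unfold \cref{alg:reply}, but the substance is identical). Two minor slips worth fixing: \cref{line:store} labels the $\SCM.\UpdateState$ call rather than the local store, and your case~(iii) in fact has $\id_\ST = \id_\scm$, not $\id_\ST = \id_\scm + 1$, since SCM has already incremented its counter by the time it returns $\mathsf{Ack}$ --- which is how you correctly treat that case in the following paragraph anyway.
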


The proof of Lemma~\ref{lemma:live} appears in~\cref{app:proof}.

\subsection{DP Confidentiality}

\paragraph{Information released in \dpsystem.}
By design, only TEEs running data owner's code have access to the data encryption key~$k_\cd$.
Hence, information about the dataset is released by valid TEEs only.
TEEs store states to storage module, send updated digests to SCM, abort and return answers to a data analyst.
The first two actions do not reveal information about $\cd$ or computed answers $\answer$
since the state is stored encrypted and the digest is a hash of the encrypted state. 
If state size (e.g., the size of~$\answer$) is sensitive, one may consider padding it.
Aborts are associated with integrity of the state continuity and loaded states 
and do not depend on the content of the data or returned answers.
Hence, we only need to consider the information released by TEEs via replies
to queries.

We capture query replies via tuples $(\id, \query, \answer)$ sent by a TEE to a data analyst.
Let $\mathcal{T}$ be a sequence of these tuples.
We show that $\id$s in $\mathcal{T}$ follow a sequential order and are numbered $1$ to $n$ where
$n$ is the number of queries in~$Q$.
Due to TEE crashing and resending the answers to previously asked queries after recovery, 
$\mathcal{T}$ may have tuples with same~$\id$s. 
However, we show that if there is more than one tuple with the same $\id$, 
then all these tuples are equal (Lemma~\ref{lemma:cond1}). 
Hence, these entries do not reveal any new information and are copies of previously sent answers.
\ifCCS
Due to space constrains the proof of Lemma~\ref{lemma:cond1} appears in~\cref{app:proof}.
At a high level it shows that each tuple is associated with exactly one entry at SCM where each entry has a unique identifier.
\fi

\begin{lemma}
\label{lemma:cond1}
Let $\mathcal{T}$ be a sequence of tuples returned by a TEE.
Then (1) all identifiers in the sequence follow a sequential order, starting at 1
and (2) if more than one tuple has the same identifier then those tuples are equal.
\end{lemma}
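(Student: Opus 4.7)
My plan is to trace the two code paths where a TEE emits a tuple to the analyst --- \cref{line:sendanswer1} of $\Init$ and \cref{line:sendanswer2} of $\ReplyQueries$ --- and to combine the monotonicity enforced by the $\SCM$ (\cref{alg:scm}) with \cref{lemma:integrity} to pin down both the identifier sequence and the uniqueness of each entry.

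For part~(1), the key invariant I will use is that SCM's internal counter $\id_\scm$ starts at $0$ and grows by exactly $1$ each time $\UpdateState$ succeeds, since the SCM accepts only updates with $\id = \id_\scm + 1$. Every analyst-visible tuple is emitted either immediately after a successful $\UpdateState$ in $\ReplyQueries$ or during $\Init$ as a replay of the state currently committed to SCM (possibly after the catch-up branch advances SCM by one). Hence the identifier on each emitted tuple lies in $\{1,\dots,\id_\scm\}$ for SCM's current value. I will then argue that every $k$ in this range is emitted at least once: either the TEE that first drove SCM to $k$ reaches \cref{line:sendanswer2} of $\ReplyQueries$, or, if it crashed beforehand, any subsequent successful $\Init$ that loads a state whose hash matches SCM's committed digest for $k$ replays the answer via \cref{line:sendanswer1}. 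This yields a non-decreasing sequence of identifiers covering $1, 2, \dots, n$ contiguously, where $n$ is the total number of accepted updates.

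For part~(2), I will show that the pair $(\query,\answer)$ carried in any tuple with identifier $k$ is a deterministic function of the state committed to SCM at $\id=k$. By the SCM specification, once $\id_\scm$ reaches $k$ the digest stored alongside it is frozen until the next successful update at $k+1$. Combining \cref{lemma:integrity} with collision-resistance of $\hash$, unforgeability of signatures under $\vk_s$, and correctness of authenticated decryption under $\statekey$, I will conclude that the decrypted state behind that digest is unique per~$k$; hence so is $(\query,\answer)$. Both $\ReplyQueries$ and $\Init$ only emit a tuple for identifier $k$ after establishing that its state has exactly this committed digest, so any two emissions with identifier $k$ must agree.

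The main obstacle will be handling the fork case: two TEEs initialized from the same SCM-committed state at $\id=k$ that each compute answers to different queries and race to call $\UpdateState$ at $\id=k+1$. I must argue that, even though both may overwrite the untrusted storage, only the TEE whose $\UpdateState$ is acknowledged reaches \cref{line:sendanswer2}; the loser aborts at \cref{line:abortreply} because SCM has already advanced to $k+1$ with a different digest. Consequently at most one $(\query,\answer)$ is ever emitted for identifier $k+1$ by an original writer, and any subsequent $\Init$ that replays it does so against the uniquely committed SCM digest, preserving uniqueness. A symmetric argument covers the case where a crashed TEE leaves storage one step ahead of SCM: the catch-up branch in $\Init$ commits a single state (rejecting any competing committer by SCM's $\id=\id_\scm+1$ check), after which the same uniqueness argument applies.
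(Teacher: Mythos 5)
Your proposal is correct and follows essentially the same route as the paper's proof: both anchor every emitted tuple to a state successfully committed at the SCM, derive sequentiality from the SCM's $\id = \id_\scm + 1$ acceptance rule starting from the owner's $\id = 0$, and derive uniqueness of duplicates from the fact that a replayed answer is read back from the state pinned down by \cref{lemma:integrity}. Your treatment is somewhat more explicit than the paper's --- notably in spelling out the fork race at \cref{line:abortreply} and the cryptographic binding of the decrypted state to the SCM digest --- but it is the same decomposition, not a different argument.
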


\begin{lemma}
Assuming the integrity and confidentiality properties of a TEE,
the integrity properties of SCM and standard cryptographic assumptions for signatures, authenticated encryption
and the hash function, \dpsystem provides DP confidentiality (Definition~\ref{def:securedp}).
\end{lemma}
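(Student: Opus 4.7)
The plan is to leverage Lemma~\ref{lemma:cond1} to deduplicate the analyst's view into a canonical sequence of tuples $(1,\query_1,\answer_1),\ldots,(n,\query_n,\answer_n)$ ordered by identifier, and then exhibit this sequence --- once augmented with the appropriate per-step budgets --- as a transcript $T$ that is equivalent to the ideal transcript on some permutation $\pi(Q)$. I would take $\pi$ to be the order in which queries were committed to SCM: $\pi(Q)[i] = \query_i$, with $\query_i$ being the query whose reply carried identifier $i$. By Lemma~\ref{lemma:cond1} this is a well-defined bijection between $\{1,\ldots,n\}$ and the committed queries, and every duplicated analyst-side entry collapses to one of these canonical tuples.

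First, I would reconstruct the budget value $\budget_i$ at position $i$ by unwinding the SCM state chain. Lemma~\ref{lemma:integrity} guarantees that the SCM state with identifier $i$ was produced by a valid TEE running $\ReplyQueries$ on the unique state with identifier $i-1$, rooted at the initial state with budget $\bm{\budget}$. A short induction on $i$ then shows $T[i].\budget = T_{\mathsf{ideal}}[i].\budget$, since the arithmetic in \cref{alg:reply} --- decrement by $\epsilon$ when affordable, otherwise leave the budget unchanged and overwrite $\answer$ by $\bot$ --- is identical to that in \cref{alg:idealdp}. In the same step I would check $T[i].\query = T_{\mathsf{ideal}}[i].\query = \query_i$, which holds by the definition of $\pi$.

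Next, I would argue the answer-distribution clause: when a TEE sets $\answer_i \neq \bot$, it computes it as $\mech(\cd,\query_i)$ on the genuine dataset. This is valid because only TEEs loaded with $\datakey$ can decrypt $\enccd$ (Lemma~\ref{lemma:integrity} plus authenticated encryption), and because TEE isolation keeps $\mech$'s internal coins independent of the adversary's view, even across forks (the threat model explicitly assumes independent randomness in each TEE instance). So $\answer_i$ is a genuine draw from $\mech(\cd,\query_i)$, matching the non-$\bot$ branch of equivalence; the $\bot$ case coincides trivially since both algorithms emit $\bot$ exactly when the remaining budget cannot cover $\epsilon$.

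The hard part will be ruling out ``phantom'' leakage in the analyst's view arising from crashes, forks, and aborts. I would handle this by close inspection of the protocol: in \cref{alg:reply} the answer is emitted on \cref{line:sendanswer2} only after SCM has acknowledged the state update on \cref{line:store}, so at most one forked TEE per identifier reaches the send step; any parallel TEE loses the race on SCM and aborts at \cref{line:abortreply} without emitting a tuple. Aborts during $\Init$ (prior to \cref{line:finishinit}) similarly send no $(\id,\query,\answer)$ tuples, and resends on \cref{line:sendanswer1} coincide with previously committed tuples and are absorbed by Lemma~\ref{lemma:cond1}. Finally, I would invoke the opening observation of the subsection that the remaining TEE-to-outside traffic --- encrypted states and signed hashes of them --- is data-independent (up to padded lengths), so the analyst learns nothing beyond the deduplicated sequence. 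Together these yield $T \approx T_{\mathsf{ideal}}$ for the chosen $\pi(Q)$.
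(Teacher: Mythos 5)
Your proposal is correct and follows essentially the same route as the paper's proof: construct $T$ from the deduplicated reply tuples of Lemma~\ref{lemma:cond1}, take $\pi$ to be the identifier order, prove budget equality by induction along the chain of SCM-committed states back to the owner's initial state, and observe that equal budgets plus the shared mechanism $\mech$ with deterministic $\epsilon$ give the answer-equivalence clause. Your closing paragraph on ruling out leakage from aborts, forks, and encrypted state traffic matches the paper's ``Information released in \dpsystem'' preamble, so nothing is missing.
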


\begin{proof}
We first show how to construct transcript $T$ for \dpsystem and determine order $\pi$, as per~Definition~\ref{def:securedp}.
We then show the equivalence with the transcript of the ideal functionality.

Let $\mathcal{T}$ be all the tuples $(\id, \query, \answer)$ sent by a TEE to a data analyst (as defined in this section)
but after removing equal tuples.
We then augment each tuple with the budget~$\budget$
stored in a TEE (\eg runtime memory) at the time it sent this tuple.
Transcript~$T$ is then constructed as follows: $T[\mathcal{T}.\id] = (\mathcal{T}.\query, \mathcal{T}.\answer, \budget)$.
The index positions associated with queries in $T$ form an order $\pi$ on $Q$ that we use for~Definition~\ref{def:securedp}.

Given properties of the sequence of replies $\mathcal{T}$ (Lemma~\ref{lemma:cond1}) and by construction above,
we know that both transcripts are of the same length and queries are equal: $T[i].\query = T_\mathsf{ideal}[i].\query$.
Hence, we need to show that the corresponding budgets are equal and answers are either both equal to $\bot$
or are from the same distribution (condition of Definition~\ref{def:securedp}).
Note that if budgets are equal then the condition on the answers will hold since both the ideal functionality and the code
in the TEE use the same mechanism~$\mech$ to compute an answer (albeit with different randomness).
Moreover, both use deterministically computed $\epsilon$ to determine if a query exceeds the budget or not. Hence, both functionalities will either return~$\bot$ or answers from the same distribution.
Hence, in the rest of the proof we show that budgets across both transcripts are the same.

We prove that budgets are equal by showing that the answer and the (updated) budget for entry with identifier~$\id$ use the budget of entry $\id-1$ or owner's initial budget, if $\id=1$. 
Since budget update is performed by the same
code as the one from the ideal functionality, this would lead to an equivalent budget update and answer computation
for this entry.

In the base case, when no queries have been answered, there is only one state that is available in the system
and can be successfully loaded into a TEE; it is the state generated by the owner: $\bot\|\bot\|\bm{B}$ with $\id = 0$. 
In this case SCM is not updated as it stores the same state loaded by the owner. 
According to~Lemma~\ref{lemma:integrity},
 if TEE is initialized, it is loaded with owner's budget and $\id=0$. 
When query associated with $T[1]$ arrives, 
$\ReplyQueries$ uses $\bm{\budget}$ to determine the answer 
and the next budget, and  $\id$ to increment the identifier. 
The answer is sent (and hence appears in the transcript) only if the state was successfully updated~at~SCM.

For the general case, we prove that the only state that TEE can use to answer query 
at entry $T[i]$ must be from the state associated with an answered query $T[i-1]$.
First we show that the state that a TEE uses to determine an answer for any query 
must have an entry associated with it in $T$ (\ie there must have been an answer produced by the TEE). 
This is true, since a state is stored by a TEE and updated at SCM, before the answer is sent. 
Even if a TEE crashes after it has updated both storage and SCM but before it sends an answer, 
the new TEE will be loaded from the saved state and will send the saved answer before proceeding 
to reply other queries (\cref{line:sendanswer1} in~\cref{alg:init}). 
Since TEE associated identifier $i$ with this particular query, 
its previous state must have been associated with identifier $i-1$. 
The states have the same identifiers as the answers sent at that state, hence, 
the answer must be associated with identifier $i-1$. Since identifiers are unique in the transcript, 
the previous answer must have been for the entry $T[i-1]$.
\end{proof}

\section{\dpsystem Evaluation}
\label{sec:eval}

We evaluate the performance of \dpsystem
w.r.t.~a na\"ive DP protocol that provides no security beyond DP guarantees.

\subsection{Experimental Setup}

The experiments in this section were run
on Intel Xeon E-2288G CPUs that support SGX, running a 64-bit Ubuntu Linux 20.04.4 
with kernel version 5.15.0-1034 (same as in~\cref{sec:sgx-attack}).

\paragraph{\dpsystem.}
We implement \dpsystem as described in~\cref{sec:proto} in C++.
TEE and SCM are the two main components of \dpsystem.
TEE is instantiated using CPU with
Intel SGX capabilities, processes DP queries
and updates system states as in~\cref{alg:reply} inside an enclave.

In the experiments we choose to instantiate SCM with 
a distributed system of nodes with Intel SGX capabilities in a LAN environment of a data center,
each running on a separate machine to protect against a single-point failure.
Unless otherwise specified SCM has 3 nodes.
The SCM is adapted from the distributed system part of Narrator~\cite{niu2022narrator},
and accommodates the functionalities required by our protocol,
such as the signing functionality used in $\InitState$ of~\cref{alg:scm}.
Narrator adopts a consistent broadcast protocol~\cite{cachin2011introduction,reiter1994secure},
and guarantees that if a node 
fails, it can be restarted and synchronized with the latest digest
(\ie the digest with the highest $\id$) from the remaining nodes.
We use AES-GCM for encryption, SHA256 for hash, and RSA-SHA256 for signatures,
which are provided by Intel SGX SDK.

\newcommand{\Avg}{\texttt{Avg}}
\newcommand{\Var}{\texttt{Var}}
\newcommand{\Cor}{\texttt{Cor}}

\paragraph{\naivedp.}
As a baseline we use a system that runs DP code on the machine with the same specification as \dpsystem but
does not use TEEs, state continuity, encryption and signatures.

\paragraph{Datasets and queries.}
As datasets, we use the PUMS dataset~\cite{pums} (1.2 million records, 44\,MB) and the test dataset of DIGIX~\cite{digix} (2~million records, 307\,MB).
Since DIGIX is 307\,MB and the maximum SGX enclave memory is around 190\,MB,
we divide and seal DIGIX into two data chunks so that each can fit into the enclave.
During queries, the code abstracts the data as a single piece and 
loads records as requested by a DP mechanism (e.g., for average query, Gaussian mechanism may access one record at a time).
Both PUMS and DIGIX have an age column, which is used in most of our queries.

We test six queries, which range in complexity and output sizes:\begin{enumerate}
  \item Average age.
  \item Variance of age.
  \item Correlation: the Pearson correlation coefficient 
  between age and income columns (for the PUMS dataset).
  \item GroupBy:
  the averaged incomes for different age groups of the PUMS dataset.
  The number of groups ranges in \{20, 50, 100\} and corresponds to the query output size.
  For example, if the group size is 100 (i.e., group by each age year), 
  then the output consists of 100 floating-points.
  \item Shuffle DP for reporting age group:
  the computation proceeds by shuffling the age column,
  bucketing the age to one of 10 groups and randomizing this value
  using Alg.~1 from~\cite{balle2019privacy}. The output is the randomized age group of each
  record in the shuffled order.
  Hence, query output size is equal to the input size (\eg 1.2M for the PUMS dataset).
  \item Private Multiplicative Weights (PMW):
  the query takes input~$A$ and computes \%records with age of $A$ or higher.
  We use Multiplicative Weights Update Rule (Alg.~5 from~\cite{dwork2014algorithmic}).
  The query maintains a list of 100 weights (randomly initialized) used to estimate a result (one for each age year).
  The query returns the estimated result based on this list
  if the relative error is within some target accuracy and no privacy budget is consumed.
  Otherwise, the real result protected with Laplace noise is returned,
  budget is updated (as determined by the owner at the setup) 
  and weights are updated. 
  The list of weights is persisted in the~state.
\end{enumerate}

For queries protected with the Gaussian and Laplace DP mechanisms,
each query is assigned with budget $\epsilon=1$,
and its sensitivity is estimated w.r.t.~the range of input values 
(\eg 100 for the age column) and query type.
All queries except  GroupBy and Shuffle DP return one value as an output.
Besides storing the updated budget and query output, PMW also stores the weight list in its state.

\begin{table*}[t]
\centering
\renewcommand{\arraystretch}{1.15}
\caption{
The average time (in ms) and throughput (\ie the number of queries answered per second)
to answer Average, Variance, Correlation, GroupBy, Shuffle DP, and PMW queries
for \naivedp and \dpsystem on the PUMS dataset. The relative overhead and throughput of \dpsystem
wrt.~\naivedp is reported in brackets.
The timing results are amortized over 1000 queries including the initial cost of loading of a dataset.
}

\begin{tabular}{c|c|c|c|c|c|c|c}
                                                           &       \multirow{2}{*}{System}                                    & \multicolumn{6}{c}{Query Type}                                                                                                                                                                                                                                                                                                                                                                                                                     \\ \cline{3-8} 
                                                                                    &                                                                    & {Average}                                             & {Variance}                                              & {Correlation}                                           & {GroupBy}                                               & {Shuffle DP}                                           & PMW                                                  \\ \hline\hline
\multirow{3}{*}{Time (ms)}                & NaiveDP                                                            & {3.12}                                                & {10.12}                                                 & {27.56}                                                 & {5.7}                                                   & {44.6}                                                 & 2.42                                                 \\ \cline{2-8} 
                                                                                    &\multirow{2}{*}{ElephantDP}  & 6.18 & 12.04 & 29.09 &18.24&69.3&6.94 \\ 
& &  (2$\times$)  & (1.2$\times$)        & (1.1$\times$)         & (3.2$\times$)                & (1.5$\times$)                 & (2.8$\times$)        \\                         
                                                                                    \hline\hline
                                                                                    
\multirow{3}{*}{\parbox{2cm}{\centering Throughput (\#queries/s)}} & NaiveDP                                                            & {321}                                                 & {99}                                                    & {37}                                                    & {176}                                                   & {23}                                                   & 414                                                  \\ \cline{2-8} 
                                                                                    & \multirow{2}{*}{ElephantDP}  &160 &83  & 35  &55    & 15  &144\\
                                                                                    & &  (0.5$\times$)  & (0.8$\times$)        & (0.9$\times$)         & (0.3$\times$)                & (0.7$\times$)                 & (0.3$\times$)        \\                   
\end{tabular}
\label{tab:query-eval}
\end{table*}

\subsection{Evaluation}

We measure the runtime overhead of answering a single query 
in a system that loads the PUMS dataset and answers 1000 queries, 
hence, the cost of loading the initial dataset is amortized.
  All measurements are averaged over 100 runs.

We report the exact times and throughputs in~\cref{tab:query-eval} for all queries,
including overhead and performance evaluations.

We observe that queries with shorter query answer time by \naivedp 
(\eg Average, GroupBy, and PMW) usually have bigger overheads 2--3.2$\times$ by \dpsystem,
while queries with longer query answer time 
(\eg Variance, Correlation and Shuffle DP) have smaller overheads 1.1--1.5$\times$.
That is, the overhead of \dpsystem compared to \naivedp is lower 
when computing complex queries.
\dpsystem can process between 15 to 160 queries per second, depending on the complexity of the query.

To understand what adds to an overhead we measure:
\begin{description}
  \item $t_Q$: the time it takes to compute the DP answer to a query 
  (both systems incur this cost);
  \item $t_L$: the time it takes to load and decrypt the dataset and system state
  (\naivedp incurs only the loading cost, \dpsystem incurs both the loading and decryption cost);
  \item $t_S$: the time it takes to retrieve and update the state at SCM (incurred by $\dpsystem$).
\end{description}
We report these measurement on all queries in~\cref{fig:overhead}.

\begin{figure}[h]
\includegraphics[width=0.3\textwidth]{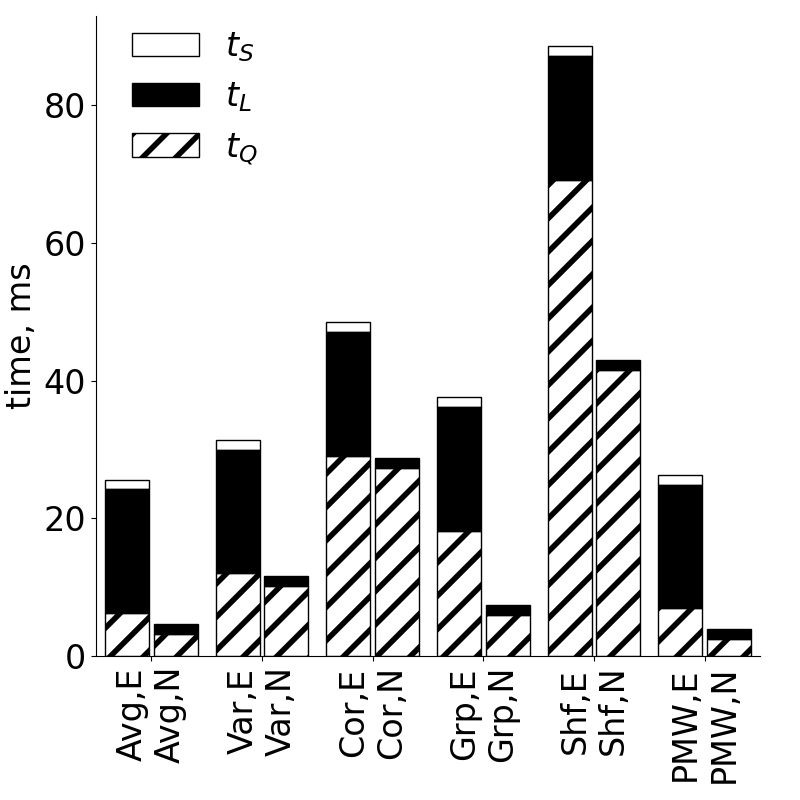}
\centering
\caption{
  The overhead for a single query Average (Avg), Variance (Var), Correlation (Cor),
  GroupBy (Grp), Shuffle (Shf), and Private Multiplicative Weights (PMW) 
  on the PUMS dataset using \dpsystem (E) and \naivedp~(N).
}
\label{fig:overhead}
\end{figure}

We observe that
most of the overhead comes from initializing the enclave and loading the data.
However, the dataset loading cost~$t_L$ is a one-off cost and can be amortized across queries, 
while~$t_Q$ and~$t_S$ are incurred per query.

Finally, we observe that \dpsystem incurs a small memory overhead from storing a single query, its answer, 
the current budget (e.g., an integer or a floating point value), and the associated signature at the server. 
This state size is proportional to the query, answer representation and internal state size (e.g., for PMW).
For statistical queries above these are well under 1KB, except for Shuffle DP, whose query answer is proportional to the data size, akin to local DP.

\paragraph{Summary.}
Overall, the experiments show that the overheads of our non-optimized implementation of \dpsystem are acceptable,
particularly in the case of complex queries. 
We note that the overhead for large datasets  comes mostly from data loading 
--- a cost that would be incurred by any system that uses TEEs for processing data. 
\dpsystem, however, provides much stronger guarantees by
protecting the budget from, for example, rollback and fork attacks which can compromise DP guarantees.

\subsection{Ablation Study}
\label{sec:ablation}
In this section we evaluate separately how different system parameters influence the performance of \dpsystem,
including the number of nodes in SCM, input size, and output size.

We first test increasing \emph{the number of nodes in SCM},
from 3 to 5 and 7, which would allow SCM to recover if more nodes fail.
We observe that when a system has more nodes,
the state update cost, $t_S$, increases from an average of 1.4\,ms to 2.3\,ms and 3.1\,ms respectively,
due to the cost of broadcast.
We note that SCM cost does not depend on a dataset or a query type as \dpsystem's state consists of a hash and a signature
which are both constant.

To explore \emph{the influence of input size},
we compare performance of
Average, Variance, and Correlation queries on the PUMS dataset and a bigger dataset DIGIX.
The results are presented in~\cref{tab:digix-eval}.
We observe that the time to compute on larger input size scales linearly
and the overhead stays the same.

\begin{table}[h]
\centering
\addtolength{\tabcolsep}{-0.3em}
\renewcommand{\arraystretch}{1.15}
\caption{
The average time (in ms) to answer queries Average, Variance, and Correlation
for \naivedp and \dpsystem on datasets of different sizes: PUMS (44MB) and DIGIX (307MB).
The timing results are amortized over 1000 queries including the cost of loading of a dataset.
The relative overhead of \dpsystem
wrt.~\naivedp is reported in brackets.
}
\begin{tabular}{c|c|rr}
\multirow{2}{*}{Dataset} & \multirow{2}{*}{Query} & \multicolumn{2}{c}{Time (ms)} \\ \cline{3-4}
                         &                        & \multicolumn{1}{r|}{NaiveDP}            & ElephantDP                    \\ \hline\hline
\multirow{3}{*}{PUMS}    & Average                & \multicolumn{1}{c|}{3.1}               & 6.2 (2$\times$)               \\
                         & Variance               & \multicolumn{1}{c|}{10.1}              & 12.0 (1.2$\times$)           \\
                         & Correlation            & \multicolumn{1}{c|}{27.6}              & 29.1 (1.1$\times$)           \\ \hline\hline
\multirow{3}{*}{DIGIX}   & Average                & \multicolumn{1}{c|}{5.2}               & 10.4 (2$\times$)              \\
                         & Variance               & \multicolumn{1}{c|}{16.9}              & 20.1 (1.2$\times$)           \\
                         & Correlation            & \multicolumn{1}{c|}{46.1}              & 48.2 (1.1$\times$)                         
\end{tabular}
\label{tab:digix-eval}
\end{table}

We test GroupBy and Shuffle DP queries with \emph{different output sizes}.
\ifCamera
The results are presented in~\cref{tab:output-eval} (with detailed breakdown of overheads
provided in~\cref{fig:output-overhead} of~\cref{app:abla-res}).
\else
The results are presented in~\cref{tab:output-eval} (with detailed breakdown of overheads
provided in Appendix in~\cref{fig:output-overhead}).
\fi
We observe that
as the query answer time increases with increasing output size for both systems,
the overhead of loading the dataset is amortized.

\begin{table}[h]
\centering
\caption{
The average time (in ms) to answer queries GroupBy and Shuffle DP
for \naivedp and \dpsystem on the dataset PUMS.
The output size  ranges in \{20, 50, 100\} for the GroupBy query and in \{120K, 600K, 1.2M\}
 for the Shuffle DP query.
The results are amortized over 1000 queries including the cost of loading the dataset.
}
\begin{subtable}[c]{.235\textwidth}
\centering
\addtolength{\tabcolsep}{-0.3em}
\renewcommand{\arraystretch}{1.15}
\caption{GroupBy Query \label{tab:output-a}}
\begin{tabular}{p{0.95cm}|p{1.1cm}|R{1.4cm}}
\multirow{2}{*}{\parbox{0.95cm}{\centering Output Size}} &  \multicolumn{2}{c}{Time (ms)}\\\cline{2-3}
& {\centering{NaiveDP}} & ElephantDP\\
\hline
\hline
\centering 20          & \centering 4.4    &  $17.6$ (3.9$\times$) \\
\centering 50          & \centering 4.9    &  $17.8$ (3.6$\times$) \\
\centering 100         & \centering 5.7     &  $18.2$ (3.2$\times$) \\
\end{tabular}
\end{subtable}
 \hspace{-0.05cm}
\begin{subtable}[c]{.235\textwidth}
\addtolength{\tabcolsep}{-0.3em}
\renewcommand{\arraystretch}{1.15}
\caption{Shuffle DP Query \label{tab:output-b}}
\begin{tabular}{p{0.95cm}|p{1.1cm}|R{1.4cm}}
\multirow{2}{*}{\parbox{0.95cm}{\centering Output Size}} &  \multicolumn{2}{c}{Time (ms)}\\\cline{2-3}
& {\centering{NaiveDP}} & ElephantDP\\
\hline
\hline
\centering 120K          & \centering 3.9      & $7.1$ (1.8$\times$) \\
\centering 600K          & \centering 21.6     & $34.7$ (1.6$\times$) \\
\centering 1.2M          & \centering 44.6      & $69.3$ (1.5$\times$) \\ 
\end{tabular}
\end{subtable}
\label{tab:output-eval}
\end{table}

\section{Related Work}
\paragraph{Secure Databases.}
Secure database systems aim to provide end-to-end confidentiality such that 
only the analyst can learn query outputs while data remains encrypted at the service provider~\cite{10.1145/3318464.3386141,10.1145/2043556.2043566,StealthDB}.
In such systems the analyst is trusted and receives true query answers, as opposed to our setting where answers
are protected from untrusted analysts with DP guarantees.
 Moreover, such systems do not consider an active attacker who can reset persistent storage.
That is, a database can be easily reset to a previous state if the attacker makes a copy of the persisted files.
Note that database logging, on its own, is not sufficient to protect against active attackers. For example,
an attacker can truncate the log (even if it is encrypted and integrity protected)
and give it to the database upon recovery, thereby rolling back its state. Hence, any database  set in a system with an active attacker has
to rely on a state continuity module. Indeed, EnclaveDB~\cite{8418608} relies on ROTE~\cite{matetic2017rote} to protect against rollback attacks.
To protect the state against fork attacks (same as those considered in this paper), 
EnclaveDB relies on analysts (referred to as clients in EnclaveDB). 
Since analysts are not trusted in \dpsystem, we propose a protocol between a TEE and SCM to ensure that only one TEE is updating the~budget.

\paragraph{Differential Privacy.}
Differential Privacy aims to bound information leakage released from answers to queries on a dataset.
In this paper, we considered a setting where a dataset owner outsources their data to an untrusted cloud and want to ensure
that analysts querying the data obtain answers with DP guarantees. We showed that it is crucial to protect DP budget
and presented \dpsystem that can achieve DP guarantees when DP code and budget are stored in an untrusted setting (\eg cloud).
Existing work has demonstrated that DP can also be vulnerable to other attacks in practice.
For example, sampling from real-valued distributions, as required by Laplace and Gaussian mechanisms, 
is not generally possible when one is limited to floating-point arithmetic, 
leading to privacy leakage~\cite{mironovFP,jin2022we,holohan2021secure,DBLP:conf/innovations/BalcerV18}.
Precision-based attacks~\cite{haney2022precision} exploit a pattern for units 
in the last places of the floating-point outputs from a DP system.
Finite-precision arithmetic
can also lead to underestimation of function sensitivity required for many DP mechanisms~\cite{10.1145/3548606.3560708}.
These attacks allow one to distinguish DP outputs of neighboring datasets, and thus invalidate DP guarantees.
Timing side-channels have also been exploited in DP systems~\cite{jin2022we,andrysco2015subnormal,haeberlen2011differential}.
To our knowledge, attacks against privacy budget (\eg via rollback) have not been considered.

Protecting DP with TEEs has been proposed in several prior works.
\textsc{prochlo}~\cite{bittau2017prochlo} collects encrypted user data 
and uses this data inside of TEEs with local DP methods.
Allen~\textit{et al.}~\cite{allen2019algorithmic} propose DP mechanisms protected against memory side-channels
when executed inside of a TEE.
DuetSGX~\cite{duetsgx} proposes to use a typechecker to verify 
that each query satisfies differential privacy before running it with Intel SGX. 
DuetSGX explicitly maintains a budget, though in runtime memory. 
Hence, in case of a crash, a TEE cannot be recovered with the latest budget 
and one may not know how many queries were answered.
Among the systems we outline in~\cref{tab:budget-impl} 
only PySyft~\cite{Ziller2021} explicitly stores the budget in a database. 
However, since it runs with no TEE guarantees and the file is not protected with either state continuity, 
integrity or confidentiality guarantees, it is susceptible to budget attacks.
In parallel to our work, Google proposed to compute advertising reports with DP guarantees
within TEEs as part of their Privacy Sandbox system~\cite{secureagg}, 
with a process for budget recovery post-disasters expected to be added in Q2 2024.

\paragraph{State Continuity.}
There are many systems that provide state continuity either by relying on monotonic counters 
or a distributed network of nodes~\cite{strackx2016ariadne,niu2022narrator,angel2023nimble,matetic2017rote}.
As we discuss in~\cref{sec:proto}, using them off the shelf for a DP system needs to be done carefully to ensure DP guarantees are provided. 
For example, one needs to determine what to save in a state and when to send the query answer to the data analyst.
We prove that our design guarantees liveness and DP confidentiality by relying on an SCM
that provides the functionality in~\cref{alg:scm}. Hence, any state continuity that fits this abstraction can be used within \dpsystem.

The closest to our work is Memoir~\cite{parno2011memoir}, a generic framework to guarantee state continuity using TPMs 
that allows recovery of a state by storing the history of state updates. 
Compared to \dpsystem, Memoir was not designed for DP and does not use modern TEEs (\eg Intel SGX).
Different from Memoir, \dpsystem updates persistent storage before recording state update at SCM. To this end, 
we require that the computed answer, not the query, to be recorded in the state. 
Thus we avoid Memoir's requirement to make the DP code deterministic.
We ensure that the answer is sent to the data analyst every time a TEE is restarted. 
Hence, when recovering from a crash, it is sufficient to recover the latest state without recomputing the query 
 or the history of updates 
--- leading to an arguably simpler protocol. On a technical note, 
the above choices result in a different protocol that we prove to be secure.
Our implementation also uses multiple TEEs to implement SCM as opposed to a single TPM that maybe subject to failure.

\section{Conclusion}
We propose \dpsystem, a DP system that provides
a secure instantiation of global DP in a remote untrusted setting.
\dpsystem design includes a crucial component of a DP system 
--- secure tracking of the privacy budget allocated for a dataset.
To this end, \dpsystem ensures that this budget is updated correctly 
after each query and the query is answered only if there is sufficient budget left.
Compared to previous work that do not provide state continuity for the budget,
\dpsystem ensures that the budget cannot be rolled back, replayed or forked.
\dpsystem uses a state continuity module and TEEs to guarantee liveness and DP confidentiality.
We use Intel SGX to instantiate our system and show that the overhead of the system 
from maintaining the state is small, while the cost of loading data can be amortized across multiple queries.

\section*{Acknowledgments}
This work has been supported by
  the joint CATCH MURI-AUSMURI; %
  the ARC Discovery Project number DP210102670; %
and
  the Deutsche Forschungsgemeinschaft (DFG, German Research Foundation) under Germany's Excellence Strategy - EXC 2092 CASA - 390781972. %
The first author is supported by the University of Melbourne 
  research scholarship (MRS) scheme.

\bibliographystyle{ACM-Reference-Format}
\bibliography{references}

\newpage
\appendix

\section{Proof of Lemmas}
\label{app:proof}

\newcommand{\thmtxt}{
\dpsystem provides \emph{liveness} (Definition~\ref{def:crash-resilient}) in a non-adversarial environment,
that is, when a restarted TEE is given the latest state stored by a valid TEE and receives
a response from a valid SCM.
}

\newtheorem*{T2}{Lemma~\ref{lemma:live}}
\begin{T2}
    \thmtxt
\end{T2}

\begin{proof}
Consider a scenario where a TEE running \dpsystem has crashed.
Let $\encs$ be the last encrypted state stored in the storage module.
Let us refer to the TEE that attempts to initialize as \TEE1 and \TEE2 be the TEE that computed $\encs$.
We need to show that a \TEE1, initialised using $\Init()$ protocol in \cref{alg:init},
loads $\encs$.
Our proof builds on the result of Lemma~\ref{lemma:integrity}:
the state loaded in a successfully initialized TEE must be most recent and come from a valid TEE or the data owner.

To complete initialization the protocol needs to reach \cref{line:finishinit}.
There are three Aborts that can prevent a successful initialization.
The first abort is due to a failed signature and freshness verification. 
These checks should succeed if the environment is not adversarial.
That is, if the correct SCM replied to the TEE in real time (\ie
the~$\nonce$ generated by \TEE1 was used and SCM's reply was signed using $\sk_\scm$), 
and the state in the reply was sent to SCM 
by a valid TEE (\ie verified using $\sigma'$ and~$\vk_s$).

The second abort happens when SCM's state does not match the one loaded in the TEE.
SCM's state can diverge from storage's state either because \TEE2 crashed after updating storage state 
but before SCM has received the state update or when there is another TEE that may have progressed SCM's state.
In the former case the update of the SCM should succeed if SCM has not been updated 
since \TEE2 has crashed and hence the latest identifier it stores must be one less than~$\id$ loaded by~\TEE1. 
The latter case would imply that there is a newer state available in the storage module, 
since storage module receives the state update before the SCM. 
However, this would contradict the assumption that $\encs$ is the latest state.

The third abort happens when loaded state cannot be verified with the expected identifier~$\id$, 
that is stored at SCM, or~$\id+1$, if a TEE was one state ahead of SCM. 
However, this can happen if storage module provides corrupted or stale states as compared to what is stored at SCM.
\end{proof}

\newcommand{\thmtext}{
Let $\mathcal{T}$ be a sequence of tuples returned by a TEE.
Then (1) all identifiers in the sequence follow a sequential order, starting at 1
and (2) if more than one tuple has the same identifier then those tuples are equal.
}

\newtheorem*{T1}{Lemma~\ref{lemma:cond1}}

\begin{T1}
    \thmtext
\end{T1}

\begin{proof}
We prove the lemma by showing that if the answer was sent by a TEE then there must
be exactly one entry in SCM associated with the same identifier that was generated by a valid TEE.
Note that the tuple is sent by a TEE (in $\Init$ and $\ReplyQueries$) only if the state with the same $\id$ was
successfully updated by the SCM. Hence, by construction there is a state associated with each answer with the same~$\id$.
Since SCM updates the state only if $\id$ sent by a TEE is one greater than
the one stored by SCM, all valid SCM states are associated with sequential $\id$s.
SCM's state is initialized by the data owner with $\id =0$. Hence, the only update that can be successful after initialization
must be for $\id =1$ and as a result the first answer is associated with identifier~1.

Copies of the same answer are generated by TEE if the TEE crashes after sending the reply in~\cref{line:sendanswer1} 
or~\cref{{line:sendanswer2}} in~\cref{alg:init} and~\cref{alg:reply}, respectively, 
but before it processes the next query and updates SCM's state accordingly.
Once an answer is computed, it is associated with an $\id$, and stored encrypted in the storage module --- becoming part of the state.
If TEE crashes before it updates the state with the next query, it will resend the identifier, 
query and answer as stored in the state for the previous query. 
This resent information will be the same since it is retrieved from the correctly recovered state as per~\cref{lemma:integrity}.
\end{proof}

\section{Additional evaluation results}
\label{app:abla-res}

\cref{tab:node-eval} lists time overhead (in ms) of sate update ($t_S$) for
different number of nodes in SCM (\cref{sec:ablation}).

\begin{table}[h]
\centering
\addtolength{\tabcolsep}{-0.3em}
\renewcommand{\arraystretch}{1.15}
\caption{
The time overhead (in ms) of sate update ($t_S$) for different number of nodes in SCM.
All timing measurements are averaged over 100 runs.
}
\begin{tabular}{c|c|c|c}
\multicolumn{1}{c|}{Node Number} & \multicolumn{1}{c|}{3}   & \multicolumn{1}{c|}{5}     & \multicolumn{1}{c}{7}    \\ \hline\hline
Time (ms)                        & 1.4                      & 2.3                        & 3.1                        \\
\end{tabular}
\label{tab:node-eval}
\end{table}

Figure~\ref{fig:output-overhead} displays detailed overhead when varying the output size 
in Table~\ref{tab:output-eval} in~\cref{sec:ablation}.

\begin{figure}[H]
\centering
\begin{subfigure}{0.23\textwidth}
\centering
    \includegraphics[width=0.8\linewidth]{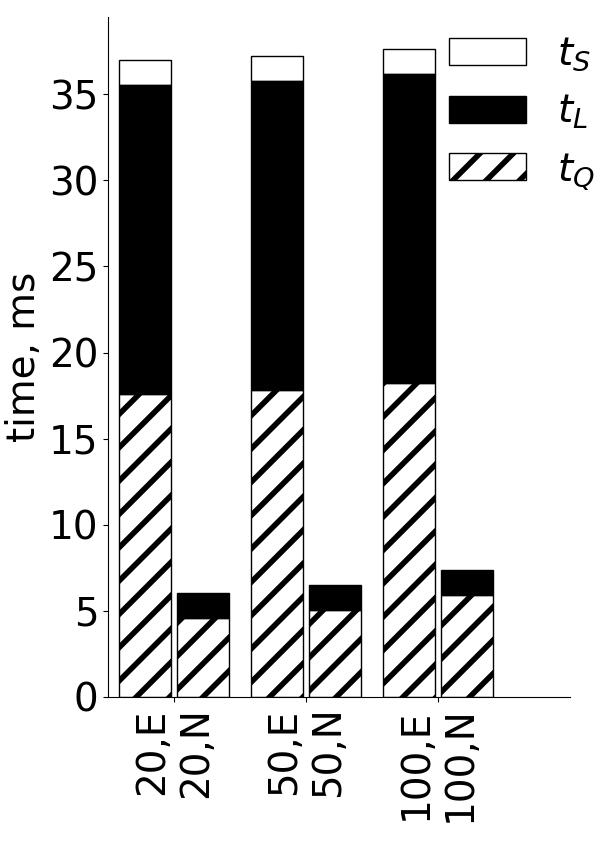}
    \caption{GroupBy Average Query}
    \label{fig:oh-gr}
\end{subfigure}
\begin{subfigure}{0.23\textwidth}
\centering
    \includegraphics[width=0.8\linewidth]{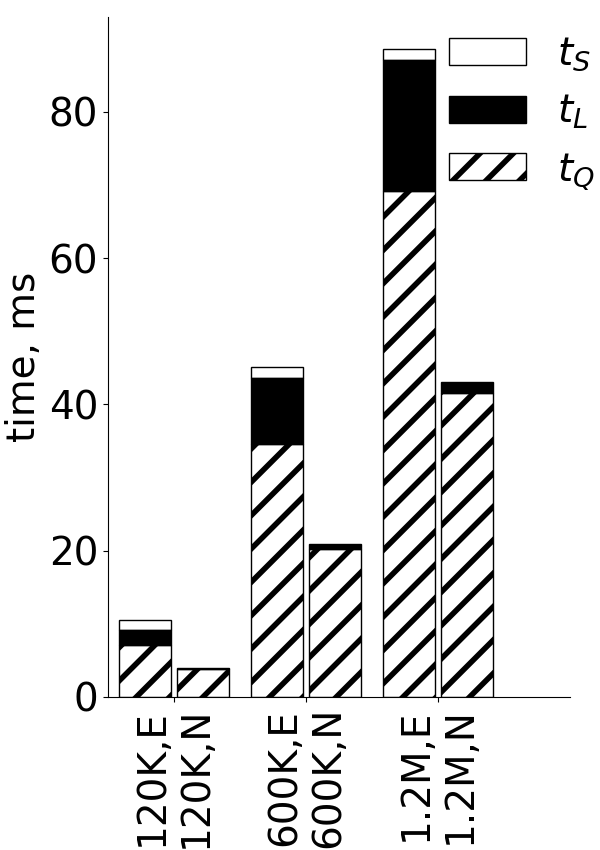}
    \caption{Shuffle DP Query}
    \label{fig:oh-shf}
\end{subfigure}
\caption{
  The time overhead (in ms) of answering queries GroupBy and Shuffle DP 
  by \naivedp (N) and \dpsystem (E) on the PUMS dataset. 
  The time comprises of: dataset loading ($t_L$),
  answering a query ($t_Q$) and updating the state at SCM by \dpsystem ($t_S$).
  (a) GroupBy query on the income grouped by the age (same as in Table~\ref{tab:output-a}).
  The number of groups ranges in \{20, 50, 100\} and corresponds to query output size.
  (b) Shuffle DP query (same as in Table~\ref{tab:output-b}) on the age column.
  The output size ranges in \{120K, 600K, 1.2M\}.
  All measurements are averaged over 100 runs.
}
\label{fig:output-overhead}
\end{figure}

\end{document}